\newif\ifdraft  
\newcommand{\REMARK}[2]{{\bf [*[#1:~#2]*]}}
\newcommand{\erich}[1]{\REMARK{ERICH}{#1}}
\newcommand{\val}[1]{\REMARK{VAL}{#1}}
\newcommand{\finish}[1]{\REMARK{FINISH}{#1}}
\newcommand{\REMARK}[2]{}
\newcommand{\erich}[1]{}
\newcommand{\val}[1]{}
\newcommand{\finish}[1]{}
\newcommand{\eat}[1]{}
\newtheorem{theorem}{Theorem}
\newtheorem{proposition}[theorem]{Proposition}
\newtheorem{corollary}[theorem]{Corollary}
\newtheorem{definition}[theorem]{Definition}
\newtheorem{example}[theorem]{Example}
\newenvironment{proof}{\textbf{Proof:}}{\hfill$\blacksquare$\\}
\renewcommand{\phi}{\varphi}
\newcommand{\Vars}{\mathsf{Vars}}
\newcommand{\eop}{\;\mathsf{op}\;}
\newcommand{\barx}{\mathbf{x}}
\newcommand{\bara}{\mathbf{a}}
\newcommand{\gothA}{\mathfrak{A}}
\newcommand{\gothE}{\mathfrak{E}}
\newcommand{\gothF}{\mathfrak{F}}
\newcommand{\gothG}{\mathfrak{G}}
\newcommand{\gothH}{\mathfrak{H}}
\newcommand{\gothp}{\mathfrak{p}}
\newcommand{\gothq}{\mathfrak{q}}
\newcommand{\scrV}{\mathscr{V}}
\newcommand{\FactsA}{\mathsf{Facts}_A}
\newcommand{\LitA}{\mathsf{Lit}_A}
\newcommand{\NegFactsA}{\mathsf{NegFacts}_A}
\newcommand{\nnf}{\mathsf{nnf}}
\newcommand{\LitV}{\mathsf{Lit}_V}
\newcommand{\piA}{\pi_{_\gothA}}
\newcommand{\pisA}{\pi_{_{\#\gothA}}}
\newcommand{\Api}{\gothA_\pi}
\newcommand{\Abeta}{\gothA_\beta}
\newcommand{\Modpi}{\mathsf{Mod}_{\pi}}
\newcommand{\refi}[2]{#1|_{_#2}}
\newcommand{\domi}{\mathsf{dominant}}
\newcommand{\noVdom}{\mathsf{noVdom}}
\newcommand{\notdom}{\mathsf{notdom}}
\newcommand{\denydom}{\mathsf{denydom}}
\newcommand{\nn}[1]{\bar{#1}}
\newcommand{\nnp}{\nn{p}}
\newcommand{\nnq}{\nn{q}}
\newcommand{\nnr}{\nn{r}}
\newcommand{\nns}{\nn{s}}
\newcommand{\nnt}{\nn{t}}
\newcommand{\nnX}{\nn{X}}
\newcommand{\nnx}{\nn{x}}
\newcommand{\nnz}{\nn{z}}
\newcommand{\Bool}{\mathbb{B}}
\newcommand{\Access}{\mathbb{A}}
\newcommand{\Nat}{\mathbb{N}}
\newcommand{\Trop}{\mathbb{T}}
\newcommand{\Vit}{\mathbb{V}}
\newcommand{\Fuzz}{\mathbb{F}}
\newcommand{\PosBool}{\mathsf{PosBool}}
\newcommand{\Pub}{\mathsf{P}}
\newcommand{\Cnf}{\mathsf{C}}
\newcommand{\Sec}{\mathsf{S}}
\newcommand{\Tsec}{\mathsf{T}}
\newcommand{\poly}[2]{#1\lbrack#2\rbrack}
\newcommand{\sem}[2]{\lbrack\!\lbrack#1\rbrack\!\rbrack_{#2}}
\newcommand{\ModDef}{model-defining}
\newcommand{\bfModDef}{\textbf{model-defining}}
\newcommand{\ModComp}{model-compatible}
\newcommand{\bfModComp}{\textbf{model-compatible}}
\newcounter{enum}
\newenvironment{enum}{\begin{list}{(\arabic{enum})}%
{\setlength{\labelwidth}{5mm}\setlength{\leftmargin}{10mm}%
\setlength{\itemindent}{0pt}\usecounter{enum}}}{\end{list}}
\title{Semiring Provenance for First-Order Model Checking}
\author{Erich Gr\"adel~~~~~~~~~~~~~~~~~~~~~~~~~~~~~Val Tannen\\
RWTH~Aachen~University~~~~~~~~~~~Univ.~of~Pennsylvania
}
\begin{document}
\maketitle

\begin{abstract}
Given a first-order sentence, a model-checking computation tests
whether the sentence holds true in a given finite structure. Data
provenance extracts from this computation an abstraction of the manner
in which its result depends on the data items that describe the
model. Previous work on provenance was, to a large extent, restricted to the negation-free
fragment of first-order logic and showed how provenance abstractions
can be usefully described as elements of commutative semirings --- most 
generally as multivariate polynomials with positive integer coefficients.

In this paper we introduce a novel approach to dealing with negation
and a corresponding commutative semiring of polynomials with dual
indeterminates. These polynomials are used to perform reverse provenance
analysis, i.e., finding models that satisfy various properties under
given provenance tracking assumptions.
\end{abstract}

\section{Introduction}

Semiring provenance was originally developed
for positive database query languages~\cite{pods/GreenKT07}.
From this baseline, we have recently started to investigate 
an approach to the provenance analysis of model checking for full first-order logic (FOL).
We propose a novel approach to dealing with negation in provenance
formulation and a corresponding commutative semiring of polynomials with dual
indeterminates. A preliminary account of this joint work
was given by the second author in~\cite{siglog/Tannen17}.

Data provenance is extremely useful in many computational disciplines.
Suppose that a computational process is applied to a complex input consisting
of multiple items. Provenance analysis allows us to understand how these
different input items affect the output of the process. It can be used to
answer questions of the following type:
\begin{enum}
\item Which ones of input items are actually used 
in the computation of the output? 
\item Can the same output be obtained from
different combinations of input items? 
\item In how many different ways
can the same output be computed? 
\end{enum}

As a consequence, provenance can be further
applied to issues such as deciding how much to trust the output,
assuming that we may trust some input items more than others, deciding
what clearance level is required for accessing the output, assuming that
we know the clearance levels for the input items, or, assuming that  one has to pay
for the input items, how to minimize the cost of obtaining the output.
More generally, \emph{reverse} provenance analysis
allows us to find input data (here first-order models) that satisfies 
various properties under given provenance tracking assumptions.
This is also closely related to reverse data 
management~\cite{pvldb/MeliouGS11,sigmod/MeliouS12}.

It turns out that the questions listed above, as well as several other
questions of interest,
can be answered for database transformations (queries and views) via
interpretations in commutative semirings. In past work, the semiring
provenance approach has been applied to query and view languages such
as the positive relational algebra~\cite{pods/GreenKT07,mst/Green11},
nested relations/complex values 
(objects)~\cite{pods/FosterGT08,birthday/Tannen13},
Datalog~\cite{pods/GreenKT07,icdt/DeutchMRT14},
XQuery (for unordered XML)~\cite{pods/FosterGT08}
full relational algebra 
(on $\mathbb{Z}$-annotated relations)~\cite{mst/GreenIT11},
SQL aggregates~\cite{pods/AmsterdamerDT11},
workflows with map-reduce modules~\cite{pvldb/AmsterdamerDDMST11},
and languages for data-centric (data-dependent) 
processes~\cite{vldb/DeutchMT15}. Moreover, the semiring approach has been 
successfully implemented in two software systems, \textsf{Orchestra}
\cite{vldb/GreenKIT07,sigmod/IvesGKTTTJP08,sigmod/KarvounarakisIT10} and 
\textsf{Propolis}~\cite{vldb/DeutchMT15}.

There exists a well-known tight connection between conjunctive queries
in databases and constraint satisfaction problems in
AI~\cite{jcss/KolaitisV00}. In this light, and despite a number of
technical differences, there exists an interesting connection (that
needs more exploration) between the semiring provenance framework
applied to conjunctive queries and the semiring framework for
\emph{soft constraint satisfaction}~\cite{jacm/BistarelliMR97,sp/Bistarelli04}.

The reader may have noticed that the bulk of the work on provenance
for database transformations was concerned with \emph{positive} query
languages. Indeed, trying to add to the commutative semiring structure
operations that capture difference of relations has led to
interesting and algebraically challenging, but divergent
approaches~\cite{japll/GeertsP10,mst/GreenIT11,pods/AmsterdamerDT11,tapp/AmsterdamerDT11,jacm/GeertsUKFC16}. In particular there is no separate account
of tracking \emph{negative} information, an aspect that we hope to remedy
here.

\subsection{Provenance Semantics}
\label{subsec:sem}
We shall consider certain non-standard semantics for FOL
that will help us to understand how a sentence $\phi$ ends up being true
in a finite structure $\gothA$, i.e., 
whether $\gothA\models\phi$ holds or not (we call this  
\emph{provenance in model checking}).
The non-standard
semantics that we champion involves various \emph{commutative
  semirings}.  Here we strive to justify this choice.

First of all, the standard semantics for first-order logic maps formulae
to truth values in $\Bool=\{\bot,\top\}$, which 
form a commutative semiring with respect 
to the operations of disjunction and conjunction, with units $\bot$
and $\top$.

Second, in a provenance semantics we want to understand the
connections between the facts (positive or negative) that are embodied
in a model $\gothA$ and their use in a justification that
$\gothA\models\phi$. Since the model is finite, we can think of such a
justification as an alternating disjunction-conjunction
\emph{proof tree} (an example appears in~\ref{subsec:ex}). 
In any case, these justifications are definitely
not proofs in some axiomatization of FOL. If we had a provenance semantics
for model checking, it would, in particular, help us to count proof trees.
This particular case suffices to suggest the semiring structure as well
as some ways in which such non-standard semantics can be quite different
from the standard one.

Notice that a semiring semantics refines the classical Boolean semantics,
and formulae that are classically equivalent may become 
non-equivalent under a semantics that counts proof trees.
Indeed, already a sentence $\phi\vee\phi$ has in general
more proof trees than $\phi$.
We further illustrate with the failure
of some of the usual logical equivalences
invoked in transforming sentences to \emph{prenex form}.

Let $\rho\equiv(\forall x\, \phi) \wedge \psi$ and $\sigma\equiv
\forall x\, (\phi \wedge \psi)$.  Every proof tree of $\rho$ can be
transformed into a proof tree of $\sigma$ by making copies of the
subtree rooted at $\psi$.  However, when $\psi$ has two or more
distinct proof trees we see that $\sigma$ can have strictly more proof
trees than $\rho$. Similarly we can argue that
$\forall x\, (\phi \vee \psi)$ can have strictly more proof trees
than $(\forall x\, \phi) \vee \psi$.

Now consider $\rho\equiv(\exists x\, \phi) \vee \psi$
and $\sigma\equiv\exists x\, (\phi \vee \psi)$. Let's write
$\phi(x)$ to show occurrences of $x$ in $\phi$. For simplicity
suppose that the model has exactly two elements, $a$ and $b$, and that
each of $\phi(a)$, $\phi(b)$, and $\psi$ has exactly one proof tree.
Then, $\rho$ will have 3 proof trees but $\sigma$ will have 4.

Finally, we note that $(\exists x\, \phi) \wedge \psi)$
and $\exists x\, (\phi \wedge \psi)$ have exactly the same number
of proof trees and this reflects the fact that multiplication distributes 
over addition. 

For other sentences, we can see that the number-of-proof-trees
constitutes a non-standard semantics for FOL sentences constructed
using disjunction, conjunction, existentials and universals,
because, moreover, addition and multiplication are
associative and commutative.

This discussion provides some partial justification for considering
commutative semirings as semantic domains. The rest of the justification
will follow from the subsequent development.

\medskip\noindent{\bf Remark. }
Instead of thinking about proof trees for $\gothA\models\phi$, 
we could equivalently consider \emph{winning strategies} in $\mathcal{G}(\gothA,\phi)$,
the model checking game for $\gothA$ and $\phi$ (see e.g. \cite{AptGra11}).
We do not pursue this aspect in this paper, but we remark that a provenance analysis in commutative 
semirings can also be developed for more general models of finite and infinite games, 
beyond the acyclic and always terminating first-order  model-checking games.
Also beyond the applications to query evaluation and logic, a provenance analysis 
of games provides insights into more subtle game-theoretic questions than just who wins the game, 
concerning for instance  the number or costs of winning strategies, or issues like confidence and trust in game-theoretic settings.
This approach will be developed in more detail in a forthcoming paper.

\subsection{Intermezzo: Examples of Commutative Semirings}
\label{subsec:commutative-semirings}

\begin{definition} An algebraic structure 
$(K,+,\cdot,0,1)$, with $0\neq1$, is a \emph{semiring} when $(K,+,0)$
is a commutative monoid, $(K,\cdot,1)$ is a monoid, $\cdot$
distributes over $+$ and $0\cdot a=a\cdot0=0$. The semiring
is \emph{commutative} when $\cdot$ is commutative, and it 
is \emph{idempotent} when $+$ is idempotent.
\end{definition}

Any distributive lattice is an idempotent commutative semiring.
Here are some commutative semirings of interest to us: 
\begin{enumerate}
\item
The Boolean semiring $\Bool=(\Bool,\vee,\wedge,\bot,\top)$ is the standard habitat of 
logical truth. It is a distributive lattice.
\item
$\Nat=(\Nat,+,\cdot,0,1)$ is used for \emph{bag semantics} in databases
and we use it here for counting proof trees.
\item
$\Trop=(\mathbb{R}_{+}^{\infty},\min,+,\infty,0)$ 
is called the \emph{tropical} semiring and is idempotent but
not a distributive lattice. Its elements and operations appear in
\emph{min-cost} interpretations (e.g., shortest paths) and it plays
a surprising role in connecting certain dynamic programming algorithms
in statistics with certain methods of algebraic 
geometry~\cite{book/PachterS05} (see also next item). 
\item
$\Vit=([0,1],\max,\cdot,0,1)$
is called the \emph{Viterbi} semiring and is isomorhic to $\Trop$
via $x\mapsto e^{-x}$ and $y\mapsto -\ln y$. When interpreted as 
probabilities, its elements and operations appear in 
\emph{statistical model} interpretations 
(e.g., maximum probability trajectories in Hidden Markov Models).
We will think of the elements of $\Vit$ as \emph{confidence scores}.
\item
$\Fuzz=([0,1],\max,\min,0,1)$, is called the \emph{fuzzy} semiring. 
It is a distributive lattice.
\item
$\Access=(\{\Pub<\Cnf<\Sec<\Tsec<0\},\min,\max,0,\Pub)$
is the \emph{access control} semiring, where
$\Pub$ is ``public'',  $\Cnf$ is ``confidential'', $\Sec$ is ``secret'', $\Tsec$ is ``top secret'', and
$0$ is ``so secret that nobody can access it!''.
This is a distributive lattice (beware! the lattice order is the opposite
of the one we used in the definition).
\item For any set $X$, the semiring $\poly{\Nat}{X}=(\poly{\Nat}{X},+,\cdot,0,1)$
consist of the multivariate polynomials in indeterminates from $X$
and with coefficients from $\Nat$. This is the commutative
semiring freely generated by the set $X$. It's used for a general
form of provenance.
\item
$\PosBool(X)=(\PosBool(X),\vee,\wedge,\bot,\top)$ is the semiring
whose elements are classes of equivalent positive (monotone)
boolean expressions with boolean variables from $X$ (its elements
are in bijection with the positive boolean expressions 
in irredundant disjunctive normal form).
This is the distributive lattice freely generated by the set $X$.
It is also used for provenance, e.g., in probabilistic databases.
\end{enumerate}

\section{First-Order Logic Interpreted in Commutative Semirings}

We are interested in the provenance analysis of the model checking
computation of first-order sentences. Such a computation is nicely and
\emph{declaratively} driven by the structure of the sentence, and
thus amounts to a non-standard semantics for FOL.  In its simplest
form model checking takes as input a finite structure and
the input items are the various facts (positive or negative) which
hold in the model. We have found however that it pays to take a more
general approach and specify not a structure but just its (finite)
universe. This way we can track the use of positive and negative facts
in checking a sentence under multiple possible models on that universe.
This allows a certain amount of \emph{reverse analysis}: finding models
that satisfy useful constraints.

\subsection{$K$-Interpretations}
Consider a finite relational vocabulary: $\scrV=\{R,S,\ldots\}$.
From this vocabulary and a finite \emph{non-empty} universe $A$ of \emph{ground
values} we construct the set $\FactsA$ of all ground relational
atoms (facts) $R(\bara)$, the set $\NegFactsA$ of all negated
facts $\neg R(\bara)$ and thus the set
$\LitA=\FactsA\cup\NegFactsA$ of all \emph{literals},
positive and negative facts, over $\scrV$ and $A$.
By convention we will identify $\neg\neg R(\bara)\equiv R(\bara)$)
so the negation of a literal is again a literal.

Any finite  structure $\gothA = (A,R^{\gothA},S^{\gothA},\ldots)$ 
with universe $A$ makes some of these
literals true and the remaining ones false. Note, however, that
much of the development does not assume a specific model, and this 
can be usefully exploited.

Let $(K,+,\cdot,0,1)$ be a commutative semiring. Very roughly
speaking, $0\in K$ is intended to interpret false assertions,
while an element $a\neq0$ in $K$ provides a ``nuanced'' interpretation
for true assertions (call them ``$a$-true'').

Next, $K$-interpretations will map literals to elements of $K$
and are then extended to all formulae.
Disjunction and existential quantification are interpreted by the
addition operation of $K$.
Conjunction and universal quantification are interpreted by the
multiplication operation of $K$. 
For quantifiers, the finiteness of the universe
$A$ of ground values will be essential. 
For negation we use the well-known syntactic transformation to 
\emph{negation normal form (NNF)}, denoted $\psi\mapsto\nnf(\psi)$.
Note that $\nnf(\psi)$ is a formula constructed from 
literals (positive and negative facts) and equality/inequality atoms 
using just $\wedge,\vee,\exists,\forall$. 

\begin{definition}
\label{def:sem}
A $K$\textbf{-interpretation} is a mapping $\pi:\LitA\rightarrow K$.
This is extended to FO formulae given valuations 
$\nu:\Vars\rightarrow A$: 
$$
\begin{array}{rcl@{\hspace*{10mm}}rcl}
\pi\sem{R(\barx)}{\nu}  & = &  \pi(R(\nu(\barx))
&
\pi\sem{\neg R(\barx)}{\nu}  & = &  \pi(\neg R(\nu(\barx))\\
\\
\pi\sem{x\eop y}{\nu} & = & \mathrm{if~}\nu(x)\eop\nu(y)
                                 \mathrm{~then~} 1 \mathrm{~else~}0
&
\pi\sem{\phi\wedge\psi}{\nu} & = & \pi\sem{\phi}{\nu}~\cdot~
                                   \pi\sem{\psi}{\nu}\\
\\
\pi\sem{\phi\vee\psi}{\nu} & = & \pi\sem{\phi}{\nu}~+~
                                   \pi\sem{\psi}{\nu}
&
\pi\sem{\exists x\,\phi}{\nu} & = &  \sum_{a\in A}\pi\sem{\phi}
                                        {\nu[x\mapsto a]}\\
\\
\pi\sem{\forall x\,\phi}{\nu} & = & \prod_{a\in A}\pi\sem{\phi}
                                       {\nu[x\mapsto a]}
&
\pi\sem{\neg\phi}{\nu} &=& \pi\sem{\nnf(\neg\phi)}{\nu}
\end{array}
$$
\end{definition}

The symbol $\eop$ stands for either $=$ or $\neq$. 
As you can see from the definition,
the equality and inequality atoms are interpreted in $K$ as $0$ or $1$, i.e.,
their provenance is not tracked. One could give a similar treatment to other
such relations with ``fixed'' meaning, e.g., assuming an ordering on $A$, 
however, we omit this here.

As intended, it suffices to consider formulae in NNF:

\begin{proposition} 
\label{prop:NNF}
\hspace*{5mm}
$\pi\sem{\phi}{\nu} ~=~ \pi\sem{\nnf(\phi)}{\nu}$ 
\end{proposition}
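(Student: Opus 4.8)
The plan is to prove the identity by structural induction on the formula $\phi$, exploiting the fact that the extension of $\pi$ in Definition~\ref{def:sem} was designed so that its negation clause is \emph{already} phrased in terms of $\nnf$. Before starting the induction I would recall the defining clauses of the $\nnf$ transformation that get used: it fixes atoms and literals ($\nnf(R(\barx))=R(\barx)$, $\nnf(\neg R(\barx))=\neg R(\barx)$, and likewise for the (in)equality atoms), it commutes with the positive connectives and the quantifiers ($\nnf(\phi_1\wedge\phi_2)=\nnf(\phi_1)\wedge\nnf(\phi_2)$, $\nnf(\phi_1\vee\phi_2)=\nnf(\phi_1)\vee\nnf(\phi_2)$, $\nnf(\exists x\,\psi)=\exists x\,\nnf(\psi)$, $\nnf(\forall x\,\psi)=\forall x\,\nnf(\psi)$), and it pushes negation inward over a compound via the De Morgan and quantifier-dualizing rules.

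The base cases are immediate: if $\phi$ is a positive literal or an (in)equality atom then $\nnf(\phi)=\phi$ and there is nothing to prove; a negated atom $\neg R(\barx)$ is likewise fixed by $\nnf$, and here the special literal clause and the general negation clause of Definition~\ref{def:sem} agree, so the classification of this case is immaterial. For the inductive step on a positive connective or quantifier --- say $\phi=\phi_1\wedge\phi_2$ --- I would apply the conjunction clause of Definition~\ref{def:sem}, then the inductive hypothesis to each $\phi_i$, then the conjunction clause backwards together with the commutation $\nnf(\phi_1\wedge\phi_2)=\nnf(\phi_1)\wedge\nnf(\phi_2)$; the cases of $\vee$, $\exists$, and $\forall$ run verbatim the same way, using that $+$ and $\prod$ are applied pointwise over $A$ and that $\nnf$ descends through the quantifier. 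The only case carrying genuine content is $\phi=\neg\psi$, and there the claim is essentially free: by the negation clause $\pi\sem{\neg\psi}{\nu}=\pi\sem{\nnf(\neg\psi)}{\nu}$, and since $\nnf(\phi)=\nnf(\neg\psi)$ by definition, the two sides coincide without even invoking the inductive hypothesis.

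The step I would treat most carefully is precisely this negation case, since it is where an apparent circularity must be ruled out: the right-hand side $\nnf(\neg\psi)$ of the defining clause is not a syntactic subformula of $\neg\psi$, so one must be sure the recursion of Definition~\ref{def:sem} --- and hence the quantity $\pi\sem{\phi}{\nu}$ itself --- is well-founded. The honest way to discharge this is to note that a single application of the clause replaces $\neg\psi$ by an NNF formula, on which evaluation proceeds by pure structural recursion with negation confined to atoms; phrased as a termination argument, this is the standard weight function that assigns a compound negation strictly more weight than the sum of its De Morgan reducts. Once well-foundedness is granted, no obstacle remains: the induction goes through mechanically, with every case except the trivially-true negation case reducing to the commutation of $\nnf$ with a single connective together with the inductive hypothesis.
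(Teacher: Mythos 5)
Your argument is correct. The paper actually states Proposition~\ref{prop:NNF} without any proof, treating it as an immediate consequence of the way Definition~\ref{def:sem} builds the $\nnf$ translation into its negation clause; your structural induction is exactly the argument the authors leave implicit, with the positive connectives and quantifiers handled by the commutation of $\nnf$ with each constructor and the negation case discharged by the defining clause itself. Your extra care about well-foundedness --- observing that one application of the negation clause lands on an NNF formula, after which the recursion is purely structural with negation confined to atoms --- addresses a genuine subtlety that the paper glosses over, and is the right way to justify that $\pi\sem{\phi}{\nu}$ is well defined in the first place.
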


\begin{corollary}~~~
$\begin{array}[t]{rcl}
\pi\sem{\neg(\phi\wedge\psi)}{\nu} & = &
\pi\sem{\neg\phi\vee\neg\psi)}{\nu}\\
\pi\sem{\neg(\phi\vee\psi)}{\nu} & = & 
\pi\sem{\neg\phi\wedge\neg\psi)}{\nu}\\
\pi\sem{\neg(\forall x\,\phi)}{\nu} & = &
\pi\sem{\exists x\,\neg\phi}{\nu}\\
\pi\sem{\neg(\exists x\,\phi)}{\nu} & = &
\pi\sem{\forall x\,\neg\phi}{\nu}\\
\pi\sem{\neg\neg\phi}{\nu} & = & \pi\sem{\phi}{\nu}
\end{array}$
\end{corollary}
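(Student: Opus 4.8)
The plan is to derive all five identities uniformly from Proposition~\ref{prop:NNF} together with the defining recursion of the transformation $\nnf$. The step I would isolate first is a one-line principle: if two formulae $\alpha$ and $\beta$ satisfy $\nnf(\alpha)=\nnf(\beta)$ \emph{syntactically}, then $\pi\sem{\alpha}{\nu}=\pi\sem{\beta}{\nu}$ for every $K$-interpretation $\pi$ and valuation $\nu$. This is immediate: by Proposition~\ref{prop:NNF} each side equals the semantics of its own negation normal form, and those normal forms are literally the same formula, so the two semantic values coincide. With this principle in hand, each line of the corollary reduces to a purely syntactic check that its left- and right-hand formulae have a common negation normal form.

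Then, for each of the five equations, I would carry out that check by unfolding one step of $\nnf$. For the De~Morgan lines, $\nnf(\neg(\phi\wedge\psi))=\nnf(\neg\phi)\vee\nnf(\neg\psi)=\nnf(\neg\phi\vee\neg\psi)$, and dually $\nnf(\neg(\phi\vee\psi))=\nnf(\neg\phi)\wedge\nnf(\neg\psi)=\nnf(\neg\phi\wedge\neg\psi)$; for the quantifier lines, $\nnf(\neg\forall x\,\phi)=\exists x\,\nnf(\neg\phi)=\nnf(\exists x\,\neg\phi)$ and $\nnf(\neg\exists x\,\phi)=\forall x\,\nnf(\neg\phi)=\nnf(\forall x\,\neg\phi)$; and for double negation, $\nnf(\neg\neg\phi)=\nnf(\phi)$. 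In each case the outer $\nnf$ step applied to the left side produces exactly the normal form of the right side, so the common-normal-form principle applies and the identity follows. Here I rely on $\nnf$ being a homomorphism for $\wedge,\vee,\exists,\forall$, which is exactly its recursive definition.

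It is worth noting where Proposition~\ref{prop:NNF} does real work, rather than the definition of the semantics alone. The last identity, $\pi\sem{\neg\neg\phi}{\nu}=\pi\sem{\phi}{\nu}$, is a good test: the semantic clause $\pi\sem{\neg\psi}{\nu}=\pi\sem{\nnf(\neg\psi)}{\nu}$ only yields $\pi\sem{\neg\neg\phi}{\nu}=\pi\sem{\nnf(\phi)}{\nu}$, and bridging from $\pi\sem{\nnf(\phi)}{\nu}$ back to $\pi\sem{\phi}{\nu}$ for an arbitrary, not-yet-normalized $\phi$ is precisely the content of the Proposition. The same dependence is present, more quietly, on the right-hand sides of the other four lines, where the subformulae $\neg\phi$ and $\neg\psi$ may themselves be compound.

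I do not expect a genuine obstacle, since the statement is a direct corollary; the work is bookkeeping. The only points requiring care are (i) having the defining clauses of $\nnf$ available explicitly, so that the one-step unfoldings above are justified rather than merely asserted, and (ii) checking that the semantic clause for negation is consistent with the clause for negated atoms and for $\pi\sem{x\eop y}{\nu}$, so that the common-normal-form principle remains valid when $\phi$ or $\psi$ bottoms out at a literal or an equality atom; there the flip between $=$ and $\neq$ induced by $\nnf$ must agree with the definition of the equality/inequality clause.
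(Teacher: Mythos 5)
Your proposal is correct and follows essentially the route the paper intends: the corollary is stated immediately after Proposition~\ref{prop:NNF} with no separate argument, precisely because each identity reduces, as you show, to the fact that both sides share a common negation normal form (plus one extra invocation of the Proposition for the double-negation line). Your explicit ``common normal form'' principle and the remark about where Proposition~\ref{prop:NNF} does real work are accurate refinements of what the paper leaves implicit.
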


A useful consequence of Proposition~\ref{prop:NNF} is that we can prove
further results by induction on formulas in NNF, and hence avoid the
negation connective.
When $\phi$ is a sentence we write just $\pi\sem{\phi}{}$.

\begin{proposition}[Fundamental Property]
\label{prop:hom}
Let $h:K_1\rightarrow K_2$ be a semiring homomorphism
and let $\pi_1:\LitA\rightarrow K_1$ and
$\pi_2:\LitA\rightarrow K_2$ be interpretations such that
$h\circ\pi_1=\pi_2$. Then, for any FOL sentence $\phi$ we have
$h(\pi_1\sem{\phi}{}) ~=~ \pi_2\sem{\phi}{}$. As diagrams

~~~~~~~~~~~~~~~~~~~~~
\begin{tikzpicture}
\begin{scope}
  \node[shape=circle] (A) at (0,0) {$K_1$};
  \node[shape=circle] (B) at (2,2) {$\LitA$};
  \node[shape=circle] (C) at (4,0) {$K_2$};

 \draw[->,thick] 
      (B) edge node[left] {$\pi_1\:$} (A);
 \draw[->,thick] 
      (B) edge node[right] {$\:\pi_2$} (C);
 \draw[->,thick] (A) edge node[above] {$h$} (C);

\node[shape=circle] (D) at (5.5,1) {$\Rightarrow$};

  \node[shape=circle] (A) at (7,0) {$K_1$};
  \node[shape=circle] (B) at (9,2) {FOL};
  \node[shape=circle] (C) at (11,0) {$K_2$};

 \draw[->,thick] 
      (B) edge node[left] {$\pi_1\:$} (A);
 \draw[->,thick] 
      (B) edge node[right] {$\:\pi_2$} (C);
 \draw[->,thick] (A) edge node[above] {$h$} (C);
 \end{scope}
\end{tikzpicture}
\end{proposition}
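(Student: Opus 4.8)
The plan is to reduce to negation normal form and then carry out a straightforward structural induction, the only real content being that a semiring homomorphism commutes with the finite sums and products produced by the quantifiers. First, by Proposition~\ref{prop:NNF} applied to both interpretations, $\pi_i\sem{\phi}{} = \pi_i\sem{\nnf(\phi)}{}$ for $i = 1, 2$; hence it suffices to prove the claim for $\nnf(\phi)$, i.e.\ for formulas built from literals and equality/inequality atoms using only $\wedge, \vee, \exists, \forall$. This removes the negation connective from the induction entirely. Because the quantifier cases rewrite the valuation, I would prove the stronger statement that $h(\pi_1\sem{\psi}{\nu}) = \pi_2\sem{\psi}{\nu}$ holds for every NNF formula $\psi$ (not only for sentences) and every valuation $\nu$, and then recover the proposition as the special case where $\psi = \nnf(\phi)$ is closed.

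For the base cases: on a literal $R(\barx)$ or $\neg R(\barx)$ the claim is exactly the hypothesis $h\circ\pi_1 = \pi_2$ evaluated at the fact $R(\nu(\barx))$ (respectively its negation), reading off the semantics from Definition~\ref{def:sem}. On an equality or inequality atom $x \eop y$ the interpreted value is $0$ or $1$ independently of the interpretation, and since $h$ is a semiring homomorphism we have $h(0) = 0$ and $h(1) = 1$, so both sides agree. The inductive cases for $\wedge$ and $\vee$ follow immediately from the fact that $h$ preserves $\cdot$ and $+$, combined with the induction hypothesis applied to the two immediate subformulas.

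The crux, and the only place where a structural hypothesis of the setup is genuinely used, is the treatment of the quantifiers. Here I would expand $\pi_1\sem{\exists x\,\psi}{\nu} = \sum_{a\in A}\pi_1\sem{\psi}{\nu[x\mapsto a]}$ and push $h$ through the sum. A homomorphism preserves binary addition and the unit $0$, hence all \emph{finite} sums; this is precisely why finiteness of the universe $A$ is essential, as emphasized in the text. Since $A$ is finite the sum has finitely many terms, so $h\bigl(\sum_{a\in A}\pi_1\sem{\psi}{\nu[x\mapsto a]}\bigr) = \sum_{a\in A} h\bigl(\pi_1\sem{\psi}{\nu[x\mapsto a]}\bigr)$, and the induction hypothesis applied termwise to $\psi$ under each $\nu[x\mapsto a]$ finishes the case. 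The universal case is identical with $\prod$ in place of $\sum$ and the unit $1$ in place of $0$.

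The main obstacle is thus not conceptual but organizational: one must state the induction for open formulas and arbitrary valuations so that the quantifier steps go through, and one must be careful to invoke only that $h$ respects finite sums and products, not infinite ones. Everything else is a direct consequence of the homomorphism equations $h(a+b)=h(a)+h(b)$, $h(a\cdot b)=h(a)\cdot h(b)$, $h(0)=0$, $h(1)=1$ together with the compositional clauses of Definition~\ref{def:sem}.
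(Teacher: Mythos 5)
Your proof is correct and follows essentially the same route as the paper's: reduce to negation normal form via Proposition~\ref{prop:NNF} and then do a structural induction in which the homomorphism is pushed through the semiring operations (the paper only writes out the conjunction case as an example). Your version is simply more explicit about the need to induct over open formulas with valuations and about finiteness of $A$ for the quantifier cases.
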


\begin{proof}
By Proposition~\ref{prop:NNF} the proof can proceed by induction
on formulae in NNF. For example
$h(\pi_1\sem{\phi\wedge\psi}{\nu}) =
h(\pi_1\sem{\phi}{\nu} \cdot_{_1} \pi_1\sem{\psi}{\nu}) = 
h(\pi_1\sem{\phi}{\nu}) \cdot_{_2} h(\pi_1\sem{\psi}{\nu}) = 
\pi_2\sem{\phi}{\nu} \cdot_{_2} \pi_2\sem{\psi}{\nu} = 
\pi_2\sem{\phi\wedge\psi}{\nu}$.
\end{proof}

The somewhat bombastic name ``fundamental property'' is motivated by
two observations. First, the property checks that the
definition of our semantics is nicely compositional. Second,
the property plays a central role in a strategy that we have widely
applied with query languages in databases: compute provenance
as generally as (computationally) feasible, then specialize via
homomorphisms to
coarser-grain provenance, or to specific domains, e.g., count, trust, cost
or access control.

\subsection{Intermezzo: Positive Semirings}
We say that a semiring $K$ has \emph{divisors of 0} if there
exist $a,b\in K$ such that $a\neq0$, $b\neq0$ but $ab=0$. None
of the semirings described in Sect.~\ref{subsec:commutative-semirings}
has divisors of 0. The classical examples of such are rings that
are not integral domains, e.g., $\mathbb{Z}_6$, as well as boolean algebras.

A semiring $K$ is \emph{+-positive} if $a+b=0$ implies $a=0$ and
$b=0$. Rings, e.g., $\mathbb{Z}$, or the boolean ring $\mathbb{Z}_2$,
are not $+$-positive. Finally, a semiring is (simply)
\emph{positive}~\cite{book/Eilenberg74} if it is $+$-positive and has no
divisors of $0$. All the semirings described 
in Sect.~\ref{subsec:commutative-semirings} are positive.

\begin{proposition}
\label{prop:positive-semiring}
A semiring $K$ is positive if, and only if,
$~~\dagger_{_K}:K\rightarrow\Bool~~$ defined by

\hspace*{1cm}
\begin{minipage}{0.5\textwidth}
$$
\dagger_{_K}(a)= \begin{cases}
              \top & \mathrm{if~} a\neq0\\
              \bot & \mathrm{if~} a=0
              \end{cases}
$$
\end{minipage}
\begin{minipage}{0.4\textwidth}
is a homomorphism.
\end{minipage}
\end{proposition}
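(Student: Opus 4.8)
The plan is to prove the biconditional by directly checking the four defining conditions of a semiring homomorphism for the map $\dagger_{_K}$, and to observe that they split cleanly along the two components of positivity. Recall that a map into $\Bool=(\Bool,\vee,\wedge,\bot,\top)$ is a semiring homomorphism exactly when it preserves both units ($0\mapsto\bot$, $1\mapsto\top$) and both binary operations ($\dagger_{_K}(a+b)=\dagger_{_K}(a)\vee\dagger_{_K}(b)$ and $\dagger_{_K}(ab)=\dagger_{_K}(a)\wedge\dagger_{_K}(b)$). The two unit conditions hold for $\dagger_{_K}$ in \emph{any} semiring: $\dagger_{_K}(0)=\bot$ is immediate from the definition, and $\dagger_{_K}(1)=\top$ follows from the semiring axiom $0\neq1$. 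So the content lives entirely in the two operation conditions, and the key observation I would exploit is that the additive condition is equivalent to $+$-positivity while the multiplicative condition is equivalent to the absence of divisors of $0$.

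For the forward direction I would assume $K$ positive and verify the two operation conditions as equalities of truth values. For addition, $\dagger_{_K}(a)\vee\dagger_{_K}(b)=\bot$ holds precisely when $a=0$ and $b=0$, whereas $\dagger_{_K}(a+b)=\bot$ holds precisely when $a+b=0$; these coincide because one implication is trivial ($a=b=0$ gives $a+b=0$) and the converse is exactly $+$-positivity. For multiplication, $\dagger_{_K}(a)\wedge\dagger_{_K}(b)=\top$ holds precisely when $a\neq0$ and $b\neq0$, while $\dagger_{_K}(ab)=\top$ holds precisely when $ab\neq0$; here one direction uses only the semiring axiom $0\cdot a=a\cdot0=0$ (a zero factor forces $ab=0$), and the converse is exactly the no-divisors-of-$0$ hypothesis.

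For the backward direction I would assume $\dagger_{_K}$ is a homomorphism and read off each half of positivity from the corresponding operation condition evaluated in $\Bool$. If $a+b=0$, then $\dagger_{_K}(a)\vee\dagger_{_K}(b)=\dagger_{_K}(0)=\bot$, and since a disjunction in $\Bool$ vanishes only when both disjuncts do, we obtain $a=0$ and $b=0$, giving $+$-positivity. If $a\neq0$ and $b\neq0$, then $\dagger_{_K}(ab)=\dagger_{_K}(a)\wedge\dagger_{_K}(b)=\top\wedge\top=\top$, so $ab\neq0$, giving freedom from divisors of $0$.

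There is no genuine obstacle here: the argument is a matter of unwinding definitions and keeping track of which semiring axioms supply the ``easy'' implications (namely $0\neq1$ for the multiplicative unit and $0\cdot a=0$ for one half of the multiplicative condition). The only point demanding a little care is to phrase each operation condition as an equality of truth values and to check it across all cases of whether the arguments are zero, rather than arguing in a single direction; once that is done, the correspondence between the additive clause and $+$-positivity, and between the multiplicative clause and the absence of zero divisors, makes both directions immediate.
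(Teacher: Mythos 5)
Your proof is correct. The paper itself states Proposition~\ref{prop:positive-semiring} without proof (it is a routine fact, essentially folklore going back to the notion of positive semiring in Eilenberg), and your direct verification is exactly the argument one would expect: the unit conditions hold in any semiring (using $0\neq 1$ for $\dagger_{_K}(1)=\top$), the additive operation condition is equivalent to $+$-positivity, and the multiplicative operation condition is equivalent to the absence of divisors of $0$. Nothing is missing.
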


\subsection{Sanity Checks}

Let~~$\gothA = (A,R^{\gothA},S^{\gothA},\ldots)$~~be a 
(finite) $\scrV$-model. 

The \textbf{canonical truth interpretation} for $\gothA$
is, of course, ~$\piA:\LitA\rightarrow\Bool$ where
$$
\piA(L) ~=~ \begin{cases}
            \top & \mathrm{if~} \gothA\models L\\
            \bot & \mathrm{otherwise}
            \end{cases}
$$

Earlier we have discussed ``number of proof trees''
as a non-standard semantics for FOL model-checking.
This is also captured by interpretations in a semiring.

The \textbf{canonical counting interpretation} for $\gothA$
is ~~$\pisA : \LitA\rightarrow\Nat$ where
$$
\pisA(L) ~=~ \begin{cases}
              1 & \mathrm{if~} \gothA\models L\\
              0 & \mathrm{otherwise}
              \end{cases}
$$

\begin{proposition}[sanity checks]
\label{prop:sanity}
For any FOL sentence $\phi$ we have
$\gothA\models\phi$ if, and only if,
$\piA\sem{\phi}{}=\top$.
Moreover, $\pisA\sem{\phi}{}$ is the number of proof trees that
witness $\gothA\models\varphi$.
\end{proposition}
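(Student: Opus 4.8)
The plan is to reduce both claims to formulae in negation normal form and then carry out one structural induction for each. By Proposition~\ref{prop:NNF} we have $\piA\sem{\phi}{}=\piA\sem{\nnf(\phi)}{}$ and $\pisA\sem{\phi}{}=\pisA\sem{\nnf(\phi)}{}$; since the classical truth value is preserved as well ($\gothA\models\phi$ iff $\gothA\models\nnf(\phi)$), and proof trees are naturally defined for NNF formulae, it suffices to establish both statements for an arbitrary NNF sentence. This is precisely the simplification noted after Proposition~\ref{prop:NNF}, and it lets the inductions avoid the negation connective entirely.

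For the first claim I would induct on the structure of the NNF formula, reading the Boolean operations off Definition~\ref{def:sem}. In $\Bool$ the semiring sum is $\vee$ and the product is $\wedge$, so $\sum_{a\in A}$ becomes the finite disjunction $\bigvee_{a\in A}$ and $\prod_{a\in A}$ becomes $\bigwedge_{a\in A}$; these are exactly the Tarski clauses for $\exists$ and $\forall$ over the finite universe $A$. The base cases match by construction: for a literal $L$ we have $\piA\sem{L}{\nu}=\piA(L)=\top$ iff $\gothA\models L$, and for an atom $x\eop y$ the value is $1=\top$ exactly when $\nu(x)\eop\nu(y)$ holds, i.e. exactly when the atom is classically true. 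The inductive steps for $\wedge,\vee,\exists,\forall$ then follow immediately, since $\top$ is the $\cdot$-unit and $\bot$ the $+$-unit, so a conjunction or universal evaluates to $\top$ iff every conjunct does and a disjunction or existential evaluates to $\top$ iff some disjunct does.

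For the second claim I would first fix a precise inductive definition of the finite set of proof trees witnessing $\gothA\models\phi$ for an NNF sentence $\phi$: a true literal or true (in)equality has a single one-node tree and a false one has none; a tree for $\phi\wedge\psi$ is a pair consisting of a tree for $\phi$ and a tree for $\psi$; a tree for $\phi\vee\psi$ is a tree for one of the disjuncts, tagged by which disjunct was chosen; a tree for $\exists x\,\phi$ is a choice of witness $a\in A$ together with a tree for $\phi[x\mapsto a]$; and a tree for $\forall x\,\phi$ is a family assigning to each $a\in A$ a tree for $\phi[x\mapsto a]$. Counting these sets yields the recurrences $\#(\phi\wedge\psi)=\#\phi\cdot\#\psi$, $\#(\phi\vee\psi)=\#\phi+\#\psi$, $\#(\exists x\,\phi)=\sum_{a\in A}\#(\phi[x\mapsto a])$, and $\#(\forall x\,\phi)=\prod_{a\in A}\#(\phi[x\mapsto a])$, with base values $1$ and $0$. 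These are verbatim the clauses defining $\pisA\sem{\cdot}{}$ in Definition~\ref{def:sem}, so the induction identifying $\pisA\sem{\phi}{}$ with the number of witnessing proof trees is essentially definitional once the tree notion is pinned down.

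Finally, I would record the homomorphic link between the two claims as a consistency check. Since $\Nat$ is positive, Proposition~\ref{prop:positive-semiring} makes $\dagger_{_\Nat}:\Nat\to\Bool$ a semiring homomorphism, and by construction $\dagger_{_\Nat}\circ\pisA=\piA$; the Fundamental Property (Proposition~\ref{prop:hom}) then gives $\piA\sem{\phi}{}=\dagger_{_\Nat}(\pisA\sem{\phi}{})$, which equals $\top$ precisely when $\pisA\sem{\phi}{}\neq0$, i.e. when at least one proof tree exists. I expect the only real friction to lie not in the inductions, which are routine, but in getting the definition of proof trees exactly right --- in particular the tagging that makes the two disjuncts of $\phi\vee\phi$ count separately, the witness choice in the existential case, and the product-over-$A$ bookkeeping for the universal --- so that the cardinalities line up cleanly with the semiring recurrences of Definition~\ref{def:sem}.
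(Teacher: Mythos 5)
Your proof is correct: the paper states this proposition as an unproved sanity check, and your double induction on NNF formulae (with valuations), together with the explicit inductive definition of proof trees whose counting recurrences coincide with the semiring clauses of Definition~\ref{def:sem}, is exactly the routine argument the paper intends, matching its informal discussion of proof-tree counting in Section~1.1. The closing observation that $\dagger_{_\Nat}\circ\pisA=\piA$ links the two claims is a nice (optional) cross-check consistent with Proposition~\ref{prop:positive-semiring} and the Fundamental Property.
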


Now, let $K$ be a commutative semiring, and let $\pi$
be a $K$-interpretation.  As we have indicated, for a sentence
$\phi$ we intend to interpret $\pi\sem{\phi}{}=0$ as ``$\phi$ is false
in $K$'', while $\pi\sem{\phi}{}=k\neq0$ is interpreted as ``$\phi$ is
$k$-true in $K$'', i.e., as offering ``shades of truth''. We examine
how this meshes with standard logical truth in a model.

\begin{definition}
\label{def:int-model-defining}
A $K$-interpretation $\pi:\LitA\rightarrow K$ is 
\bfModDef\
when, for each fact, one of $\pi(R(\bara))$ and $\pi(\neg R(\bara))$
is $0$ and the other one is $\neq0$.

Indeed, every \ModDef\ interpretation $\pi$ uniquely defines a 
$\scrV$-model $\Api$ with universe $A$ such that for any literal
$L$ we have $\Api\models L$ if, and only if, $\pi(L)\neq0$.
\end{definition}

Both $\piA$ and $\pisA$ shown above are \ModDef\ and the model they
define is $\gothA$. If $K$ is not $\Bool$ then several
\ModDef\ interpretations may define the same model. It is also
clear that any finite model can be defined by such an interpretation,
for any $K$.

\begin{proposition}[another sanity check]
\label{prop:strong}
Let $K$ be \emph{positive}, 
and let $\pi$ be a \ModDef\ $K$-interpretation.
Then for any FOL sentence
$$
\Api\models\phi ~~\Leftrightarrow~~\pi\sem{\phi}{}\neq0
$$ 
\end{proposition}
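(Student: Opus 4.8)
The plan is to reduce the statement to the Boolean sanity check (Proposition~\ref{prop:sanity}) via the fundamental property (Proposition~\ref{prop:hom}), exploiting positivity of $K$ through the homomorphism supplied by Proposition~\ref{prop:positive-semiring}.

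First, since $K$ is positive, Proposition~\ref{prop:positive-semiring} guarantees that the map $\dagger_{_K}:K\to\Bool$, sending nonzero elements to $\top$ and $0$ to $\bot$, is a semiring homomorphism. This is the only place positivity is used, and it is essential: without it $\dagger_{_K}$ would fail to respect $+$ (when $a+b=0$ with $a,b\neq0$) or $\cdot$ (when $ab=0$ with $a,b\neq0$), and the equivalence could break. Second, I would verify that the composite $\dagger_{_K}\circ\pi:\LitA\to\Bool$ coincides with the canonical truth interpretation $\pi_{_{\Api}}$ of the model $\Api$. This is exactly where the \ModDef\ hypothesis enters: for a literal $L$, by definition of $\Api$ we have $\Api\models L$ iff $\pi(L)\neq0$, i.e.\ iff $\dagger_{_K}(\pi(L))=\top$; on the other hand $\pi_{_{\Api}}(L)=\top$ iff $\Api\models L$. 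Hence $\dagger_{_K}\circ\pi=\pi_{_{\Api}}$ on all literals.

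Third, I would apply the fundamental property (Proposition~\ref{prop:hom}) with $K_1=K$, $K_2=\Bool$, $h=\dagger_{_K}$, $\pi_1=\pi$ and $\pi_2=\pi_{_{\Api}}$; the required hypothesis $h\circ\pi_1=\pi_2$ is precisely the identity just established. This yields $\dagger_{_K}(\pi\sem{\phi}{})=\pi_{_{\Api}}\sem{\phi}{}$ for every sentence $\phi$. Finally, I would chain the equivalences: by Proposition~\ref{prop:sanity} applied to $\Api$ we have $\Api\models\phi$ iff $\pi_{_{\Api}}\sem{\phi}{}=\top$; by the previous line this holds iff $\dagger_{_K}(\pi\sem{\phi}{})=\top$; and by the definition of $\dagger_{_K}$ this holds iff $\pi\sem{\phi}{}\neq0$. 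Combining these gives the claimed equivalence.

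The work is essentially bookkeeping once the homomorphism $\dagger_{_K}$ is in hand; the only genuinely load-bearing observation is that the \ModDef\ hypothesis makes $\dagger_{_K}\circ\pi$ exactly the Boolean interpretation of the defined model, so that all the preceding sanity checks transfer intact. An alternative, more self-contained route is a direct induction on NNF formulas, invoking $+$-positivity at the $\vee$ and $\exists$ steps and the absence of zero divisors at the $\wedge$ and $\forall$ steps; but the homomorphism argument is shorter and reuses machinery already in place, so I would prefer it.
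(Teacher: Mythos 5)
Your proof is correct and follows exactly the paper's own argument: use positivity to obtain the homomorphism $\dagger_{_K}$, observe that the \ModDef\ hypothesis makes $\dagger_{_K}\circ\pi$ the canonical truth interpretation of $\Api$, apply the Fundamental Property, and conclude via Proposition~\ref{prop:sanity}. Your write-up is in fact somewhat more explicit than the paper's about where each hypothesis is used.
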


\begin{proof}
By Proposition~\ref{prop:positive-semiring},
since $K$ is positive, $\dagger_{_K}$ is a homomorphism.
Since $\pi$ is \ModDef\ let $\gothA$ be the model defined by $\pi$.
Clearly, $\dagger_{_K}\circ\pi$ is the canonical truth interpretation $\piA$. 
Applying Proposition~\ref{prop:hom}
we get $\dagger_{_K}(\pi\sem{\phi}{}) = \piA\sem{\phi}{}$. 
Now the result follows from Proposition~\ref{prop:sanity}.
\end{proof}

In fact, we can refine the previous proposition as follows.

\begin{proposition}[refinement of Proposition~\ref{prop:strong}]~~\\
\label{prop:refi}
\vspace*{-8mm}
\begin{enumerate}
\item[\rm (a)]
For any semiring $K$ (positive or not!),
for any \ModDef\ $K$-interpretation $\pi$, and 
for any FOL sentence $\phi$ we have
$$
\pi\sem{\phi}{}\neq0 ~~\Rightarrow~~ \Api\models\phi. 
$$ 
\item[\rm (b)]
Moreover, a semiring $K$ is positive if, and only if,
for any \ModDef\ $K$-interpretation $\pi$
and any FOL sentence $\phi$ we have
$$
\Api\models\phi ~~\Rightarrow~~\pi\sem{\phi}{}\neq0.
$$ 
\end{enumerate}
\end{proposition}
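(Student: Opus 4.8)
The plan is to treat the two parts separately, since part~(a) holds for every semiring whereas part~(b) isolates exactly where positivity is needed. For part~(a) I would prove the stronger statement, for NNF formulas $\phi$ and valuations $\nu$, that $\pi\sem{\phi}{\nu}\neq 0$ implies $(\Api,\nu)\models\phi$; the sentence case is the special case in which $\nu$ plays no role. By Proposition~\ref{prop:NNF} it suffices to induct on formulas built from literals and (in)equality atoms using $\wedge,\vee,\exists,\forall$, so negation never appears. The base cases are immediate: for a literal $L$ the definition of $\Api$ gives $\Api\models L \Leftrightarrow \pi(L)\neq 0$, and an (in)equality atom takes the value $1$ exactly when it holds. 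The point is that each inductive step uses only the \emph{trivial} halves of positivity, which are valid in any semiring. Since $0\cdot k = k\cdot 0 = 0$, a product $\pi\sem{\phi}{\nu}\cdot\pi\sem{\psi}{\nu}$, and likewise a finite product $\prod_{a\in A}\pi\sem{\phi}{\nu[x\mapsto a]}$ for $\forall$, can be nonzero only if every factor is nonzero; since $0+0=0$, a sum $\pi\sem{\phi}{\nu}+\pi\sem{\psi}{\nu}$, and likewise the finite sum for $\exists$ (here finiteness of $A$ is used), can be nonzero only if some summand is nonzero. Feeding these into the induction hypothesis yields $(\Api,\nu)\models\phi$ in each case.

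For the forward direction of part~(b) there is nothing new to establish: when $K$ is positive, Proposition~\ref{prop:strong} already gives the full equivalence $\Api\models\phi \Leftrightarrow \pi\sem{\phi}{}\neq 0$, and in particular the desired implication $\Api\models\phi \Rightarrow \pi\sem{\phi}{}\neq 0$.

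The real content lies in the converse of part~(b), which I would prove contrapositively: assuming $K$ is \emph{not} positive, I must exhibit a \ModDef\ $\pi$ and a sentence $\phi$ with $\Api\models\phi$ but $\pi\sem{\phi}{}=0$. Non-positivity splits into two cases: either $K$ is not $+$-positive, so there are $a,b\neq 0$ with $a+b=0$, or $K$ has divisors of $0$, so there are $a,b\neq 0$ with $a\cdot b=0$. In both cases I would use the same setup: a one-element universe $A=\{c\}$ and a vocabulary with two unary relations $P,Q$, putting $\pi(P(c))=a$, $\pi(Q(c))=b$, and $\pi(\neg P(c))=\pi(\neg Q(c))=0$. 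This $\pi$ is \ModDef, and $\Api$ satisfies both $P(c)$ and $Q(c)$. For the $+$-positivity failure I take $\phi\equiv\exists x\,P(x)\vee\exists x\,Q(x)$, which evaluates to $a+b=0$ although $\Api\models\phi$; for the divisor-of-zero failure I take $\phi\equiv\exists x\,P(x)\wedge\exists x\,Q(x)$, which evaluates to $a\cdot b=0$ although $\Api\models\phi$.

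I expect the only genuinely delicate point to be this last construction. The inductions and the appeal to Proposition~\ref{prop:strong} are routine once the observations ``nonzero products force nonzero factors'' and ``zero sums force zero summands'' are in hand; the care is in verifying that the chosen $\pi$ really is \ModDef\ (for each fact exactly one of the two literals is sent to $0$) and that each $\phi$ is a legitimate sentence over a purely relational vocabulary, so that the witnesses to failure are admissible within the framework.
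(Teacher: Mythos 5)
Your proposal is correct and follows essentially the same route as the paper: part (a) by induction on NNF formulas using the trivial facts that nonzero products need nonzero factors and nonzero sums need a nonzero summand, part (b)'s forward direction from Proposition~\ref{prop:strong}, and the converse by exhibiting interpretations that witness a failure of $+$-positivity and of no-zero-divisors. The only (cosmetic) difference is your counterexample setup --- a one-element universe with two unary relations and quantified sentences, versus the paper's two-element universe with one unary relation and the ground sentences $R(c_1)\wedge R(c_2)$ and $R(c_1)\vee R(c_2)$.
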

\begin{proof} 
Part (a) of the proposition is by induction on $\phi$.

The left to right implication in part (b) follows 
from Proposition~\ref{prop:strong}. For the right to left
implication we first prove that $K$ has no divisors of $0$.
Suppose that $a,b\in K$ are such that $a\neq0$, $b\neq0$ but
$ab=0$. Consider $A=\{\{c_1,c_2\}$ and the \ModDef\
interpretation defined by $\pi(\neg R(c_1))=\pi(\neg R(c_2))=0$,
$\pi(R(c_1))=a$, $\pi(R(c_2))=b$ as well as the sentence
$\phi=R(c_1)\wedge R(c_2)$. We have $\Api\models\phi$ 
hence $\pi\sem{\phi}{}\neq0$, contradiction.

Next we prove that $K$ is $+$-positive. Let $a,b\in K$ be such 
that $a\neq0$ and $b\neq0$. Consider the same interpretation
$\pi$ as above, with the sentence 
$\psi=R(c_1)\vee R(c_2)$. We have $\Api\models\psi$ hence 
$0\neq\pi\sem{\psi}{}=a+b$.
\end{proof}

\subsection{``Consistency'' and ``completeness'' for $K$-interpretations}

In the study of provenance we shall also have occasion to consider
interpretations that do not correspond to a single specific model
(as formalized in Definition~\ref{def:int-model-defining}). 
Additional issues arise for such interpretations.

An interpretation in which both $\pi\sem{\phi}{}\neq0$ and
$\pi\sem{\neg\phi}{}\neq0$ for some sentence $\phi$ is seemingly
``inconsistent''. On the other hand, an interpretation
in which both $\pi\sem{\phi}{}=0$ and
$\pi\sem{\neg\phi}{}=0$ for some sentence $\phi$ seems to to be
``incomplete''. 
\footnote{The same terminology is used for logical theories.}
Of course, neither of these situations arises for a \ModDef\
$K$-interpretation when $K$ is positive (by Proposition~\ref{prop:strong}).
We analyze each of these issues in turn for general
interpretations.

First we note that we have the following:
\begin{proposition}
\label{prop:K-consistency-one}
Let $\pi:\LitA\rightarrow K$ be a $K$-interpretation.
If for every $L\in\LitA$ at least one of $\pi(L)$
and $\pi(\neg L)$ is $0$
then there exists no sentence $\;\phi\;$ for which both 
$\pi\sem{\phi}{}\neq0$ and $\pi\sem{\neg\phi}{}\neq0$.
\end{proposition}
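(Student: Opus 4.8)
The plan is to reduce both quantities to negation normal form and then run a single structural induction, taking care \emph{not} to use positivity of $K$, since the proposition is asserted for an \emph{arbitrary} semiring. Using Proposition~\ref{prop:NNF} together with the clause $\pi\sem{\neg\phi}{}=\pi\sem{\nnf(\neg\phi)}{}$, the two quantities $\pi\sem{\phi}{}$ and $\pi\sem{\neg\phi}{}$ become $\pi\sem{\nnf(\phi)}{}$ and $\pi\sem{\nnf(\neg\phi)}{}$. Setting $\psi=\nnf(\phi)$ and writing $\overline{\psi}$ for the De~Morgan dual of an NNF formula (interchange $\wedge$ with $\vee$, $\exists$ with $\forall$, each literal with its negation, and $=$ with $\neq$), we have $\nnf(\neg\phi)=\overline{\psi}$. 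So it suffices to prove the following claim, stated with free variables and a valuation so that the quantifier cases of the induction go through: \emph{for every NNF formula $\psi$ and every valuation $\nu:\Vars\to A$, at least one of $\pi\sem{\psi}{\nu}$ and $\pi\sem{\overline{\psi}}{\nu}$ is $0$.} The proposition is then the special case $\psi=\nnf(\phi)$ with $\phi$ a sentence.

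For the base cases, a relational literal $\psi$ instantiates under $\nu$ to a fact $R(\bara)$, and the two values $\pi\sem{\psi}{\nu}$, $\pi\sem{\overline{\psi}}{\nu}$ are $\pi(R(\bara))$ and $\pi(\neg R(\bara))$ in one order or the other, so the hypothesis of the proposition yields the claim at once; for an (in)equality atom exactly one of $x=y$ and $x\neq y$ holds under $\nu$, so exactly one interpretation is $1$ and the other is $0$. For the inductive step I would exploit a uniform feature: in each of the four cases $\psi=\phi_1\wedge\phi_2$, $\psi=\phi_1\vee\phi_2$, $\psi=\exists x\,\phi_1$, $\psi=\forall x\,\phi_1$, exactly one of $\psi$ and $\overline{\psi}$ is a product (a conjunction or a universal), while the other is a sum, and the factors of the product side are the duals of the summands of the sum side. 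Call the product side $P$. If $\pi\sem{P}{\nu}=0$ we are done. Otherwise, since $0\cdot a=a\cdot 0=0$ in any semiring and the product is finite (here the finiteness and non-emptiness of $A$ are used in the quantifier cases), every factor of $\pi\sem{P}{\nu}$ is nonzero; the induction hypothesis applied to each factor then forces the corresponding summand of the sum side to be $0$, so the sum side is a finite sum of $0$'s and hence $0$. Either way one of $\pi\sem{\psi}{\nu}$, $\pi\sem{\overline{\psi}}{\nu}$ vanishes.

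The one delicate point --- and the reason the argument must be organized around the \emph{product} side --- is that $K$ need not be positive. One cannot reason from a sum being nonzero, because in a non-positive semiring $a+b\neq 0$ says nothing about $a$ and $b$ separately; likewise one may only deduce that factors are nonzero from the product itself being nonzero. The two implications actually used, ``a finite product is nonzero $\Rightarrow$ every factor is nonzero'' and ``a finite sum of zeros is zero'', hold in every semiring, so the induction never invokes the defining properties of positivity (no zero divisors, $+$-positive). I expect this to be the main obstacle: resisting the symmetric-looking but invalid step of inferring anything from a nonzero sum, and instead always driving the case analysis from the product side, where the reasoning is valid unconditionally.
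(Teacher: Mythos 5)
The paper states Proposition~\ref{prop:K-consistency-one} without proof, so there is no official argument to compare yours against; judged on its own, your proof is correct. The strengthened inductive claim --- for every NNF formula $\psi$ and valuation $\nu$, at least one of $\pi\sem{\psi}{\nu}$ and $\pi\sem{\overline{\psi}}{\nu}$ is $0$, where $\overline{\psi}$ is the De Morgan dual --- is the right generalization, and your insistence on driving every inductive case from the product side is exactly what makes the argument work for an \emph{arbitrary} semiring: ``product nonzero $\Rightarrow$ every factor nonzero'' uses only $0\cdot a=a\cdot 0=0$, and ``all summands zero $\Rightarrow$ sum zero'' uses only that $0$ is the additive unit, so neither $+$-positivity nor the absence of divisors of $0$ is ever invoked. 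This matters, since the paper immediately applies the proposition in the context of $\poly{\Nat}{X,\nnX}$, which does have divisors of $0$. The one step you take for granted is the syntactic identity $\nnf(\neg\phi)=\overline{\nnf(\phi)}$, needed so that $\pi\sem{\neg\phi}{}$ really is the evaluation of the dual of $\nnf(\phi)$; this is a routine induction on $\phi$ for the standard NNF transformation (and is essentially what the corollary to Proposition~\ref{prop:NNF} rests on), but it deserves to be recorded as a lemma rather than asserted in passing.
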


Observe that if at least one of $\pi\sem{\phi}{}$ or
$\pi\sem{\neg\phi}{}$ is $0$ then 
$\pi\sem{\phi}{}\cdot\pi\sem{\neg\phi}{}=0$. If $K$ has no
divisors of 0 the converse holds as well. Although the examples
described in~\ref{subsec:commutative-semirings} 
are positive semirings, we are about to introduce,
in~\ref{subsec:dual-ind-poly}, a semiring for FOL provenance that
\emph{does} have divisors of $0$. For this reason we note also the
following:
\begin{proposition}
\label{prop:K-consistency-two}
Let $\pi:\LitA\rightarrow K$ be a $K$-interpretation.
If for every $L\in\LitA$ we have $\pi(L)\cdot\pi(\neg L)=0$
then for any sentence $\phi$ we have
$\pi\sem{\phi}{}\cdot\pi\sem{\neg\phi}{}=0$.
\end{proposition}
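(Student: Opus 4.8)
The plan is to prove Proposition~\ref{prop:K-consistency-two} by induction on the structure of formulae in negation normal form, exploiting Proposition~\ref{prop:NNF} so that we never have to deal with the negation connective directly. Concretely, I would prove the stronger-looking but more inductively robust statement: for every NNF formula $\phi$ and every valuation $\nu$, we have $\pi\sem{\phi}{\nu}\cdot\pi\sem{\nnf(\neg\phi)}{\nu}=0$. Here $\nnf(\neg\phi)$ is the NNF of the negation of $\phi$, which is again an NNF formula of the same syntactic shape as $\phi$ with each connective dualized (by the Corollary following Proposition~\ref{prop:NNF}, i.e., De~Morgan and quantifier duality). Proving the claim for all formulae with free variables, quantified over all $\nu$, is what makes the induction go through; the sentence case is then the instance with no free variables.

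First I would handle the base cases. For a literal $L=R(\barx)$ we have $\nnf(\neg\phi)=\neg R(\barx)$, so the product is $\pi(R(\nu(\barx)))\cdot\pi(\neg R(\nu(\barx)))$, which is $0$ by the hypothesis $\pi(L)\cdot\pi(\neg L)=0$; the negative-literal case is symmetric. For an equality/inequality atom $x\eop y$, its negation is the opposite atom, and exactly one of the two evaluates to $1$ while the other evaluates to $0$, so the product is $0$. Then I would do the inductive step. The interesting cases come in dual pairs. For $\phi=\psi_1\wedge\psi_2$, we have $\nnf(\neg\phi)=\nnf(\neg\psi_1)\vee\nnf(\neg\psi_2)$, so the product expands via distributivity into $\pi\sem{\psi_1}{\nu}\cdot\pi\sem{\psi_2}{\nu}\cdot(\pi\sem{\nnf(\neg\psi_1)}{\nu}+\pi\sem{\nnf(\neg\psi_2)}{\nu})$; grouping each summand so as to pair $\psi_i$ with $\nnf(\neg\psi_i)$ and applying the induction hypothesis makes each summand $0$. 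The case $\phi=\psi_1\vee\psi_2$ is symmetric with the roles of $+$ and $\cdot$ swapped.

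The quantifier cases are the crux and deserve care. For $\phi=\exists x\,\psi$ we have $\nnf(\neg\phi)=\forall x\,\nnf(\neg\psi)$, so the product is $\bigl(\sum_{a\in A}\pi\sem{\psi}{\nu[x\mapsto a]}\bigr)\cdot\bigl(\prod_{b\in A}\pi\sem{\nnf(\neg\psi)}{\nu[x\mapsto b]}\bigr)$. Distributing the sum over the product and using commutativity, each resulting term contains, for some fixed $a\in A$, both the factor $\pi\sem{\psi}{\nu[x\mapsto a]}$ (from the sum) and the factor $\pi\sem{\nnf(\neg\psi)}{\nu[x\mapsto a]}$ (picked out of the product, using that $A$ is nonempty so every index appears); pairing these two and invoking the induction hypothesis at valuation $\nu[x\mapsto a]$ annihilates each term, hence the whole product is $0$. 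The universal case $\phi=\forall x\,\psi$ is dual. The main obstacle I anticipate is precisely this quantifier bookkeeping: I must be careful that the induction hypothesis is stated for all valuations (not just sentences) so that I can instantiate it at each $\nu[x\mapsto a]$, and I must use commutativity and the finiteness/nonemptiness of $A$ to extract the matching factor from the product term by term. No use of positivity or absence of zero-divisors is needed anywhere — the argument relies only on distributivity, commutativity, and the absorbing property $0\cdot a=0$, which is exactly why this proposition is the right tool for semirings that do have divisors of $0$.
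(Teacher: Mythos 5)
Your proof is correct, and since the paper states Proposition~\ref{prop:K-consistency-two} without an explicit proof, your argument --- induction over all NNF formulae and all valuations, pairing each subformula with its dualized negation and, in the quantifier cases, extracting the matching factor at index $a$ from the product by commutativity --- is precisely the strategy the paper sets up when it notes after Proposition~\ref{prop:NNF} that further results can be proved by induction on formulas in NNF. The base cases, the distributivity bookkeeping in the connective cases, and your closing observation that only distributivity, commutativity, and $0\cdot a=0$ are used (no positivity or absence of zero-divisors) are all in order.
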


Propositions~\ref{prop:K-consistency-one} and~\ref{prop:K-consistency-two}
hold in arbitrary $K$ and each supports a kind of ``consistency'', with
the two kinds coinciding when $K$ has no divisors of $0$.

Turning to ``completeness'', note that if both $\pi\sem{\phi}{}$ and
$\pi\sem{\neg\phi}{}$ are $0$ then
$\pi\sem{\phi}{}+\pi\sem{\neg\phi}{}=0$.  If $K$ is +-positive then
the converse holds as well.  However, for arbitrary $K$,
neither an analog of
Proposition~\ref{prop:K-consistency-one} nor one of
Proposition\ref{prop:K-consistency-two} holds. Indeed, let
$K=\mathbb{Z}_4$. Consider the vocabulary consisting of one unary
relation symbol $R$ and let $A=\{c_1,c_2\}$.  For the interpretation
given by $\pi(\neg R(c_1))=\pi(\neg R(c_2))=\pi(R(c_1))=\pi(R(c_2))=2$
and the sentence $\phi=R(c_1)\wedge R(c_2)$ we have
$\pi\sem{\phi}{}=\pi\sem{\neg\phi}{}=0$.

Instead, we have the following for positive semirings.

\begin{proposition}
\label{prop:K-completeness}
Assume that $K$ is positive.
Let $\pi:\LitA\rightarrow K$ be a $K$-interpretation.
If for every $L\in\LitA$ we have $\pi(L)\neq0$ or $\pi(\neg L)\neq0$
(equivalently, $\pi(L)+\pi(\neg L)\neq0$)
then for any sentence $\phi$ we have
$\pi\sem{\phi}{}\neq0$ or $\pi\sem{\neg\phi}{}\neq0$
(equivalently, $\pi\sem{\phi}{}+\pi\sem{\neg\phi}{}\neq0$).
\end{proposition}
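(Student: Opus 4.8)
The plan is to reduce everything to negation normal form and then run an induction on the structure of NNF formulas, carrying the dual formula along. By Proposition~\ref{prop:NNF} we have $\pi\sem{\phi}{}=\pi\sem{\nnf(\phi)}{}$, and $\pi\sem{\neg\phi}{}=\pi\sem{\nnf(\neg\phi)}{}$ by definition; iterating the Corollary to Proposition~\ref{prop:NNF} (the De~Morgan laws under $\pi$) shows $\pi\sem{\neg\phi}{}=\pi\sem{\nnwide{\nnf(\phi)}}{}$, where for an NNF formula $\psi$ I write $\nnwide{\psi}$ for its De~Morgan dual (literals negated, $\wedge\leftrightarrow\vee$, $\exists\leftrightarrow\forall$). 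So with $\psi=\nnf(\phi)$ the statement for a sentence $\phi$ follows from the following claim about \emph{all} NNF formulas and \emph{all} valuations $\nu$:
$$
\pi\sem{\psi}{\nu}\neq0 \quad\text{or}\quad \pi\sem{\nnwide{\psi}}{\nu}\neq0 .
$$
Generalizing from sentences to formulas with free variables (universally over $\nu$) is essential, since the quantifier cases open a formula and instantiate its quantified variable.

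Before the induction I would record the two consequences of positivity that drive the argument: because $K$ is $+$-positive, a finite sum is $0$ iff every summand is $0$, so a single nonzero summand forces the sum to be nonzero; and because $K$ has no divisors of $0$, a finite product is nonzero as soon as every factor is nonzero. These are precisely matched to the semantics, in which $\vee,\exists$ are additive and $\wedge,\forall$ are multiplicative. The base cases are then immediate: for a literal $\psi=R(\barx)$ or $\psi=\neg R(\barx)$, the two values $\pi\sem{\psi}{\nu}$ and $\pi\sem{\nnwide{\psi}}{\nu}$ are exactly $\pi(L)$ and $\pi(\neg L)$ for the ground literal $L=R(\nu(\barx))$, and the hypothesis guarantees one of them is nonzero; for an atom $x\eop y$ one of $\pi\sem{x\eop y}{\nu}$ and $\pi\sem{\nnwide{x\eop y}}{\nu}$ is $1\neq0$.

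The inductive cases come in dual pairs. For $\psi=\alpha\wedge\beta$ (so $\nnwide{\psi}=\nnwide{\alpha}\vee\nnwide{\beta}$): if either $\pi\sem{\nnwide{\alpha}}{\nu}$ or $\pi\sem{\nnwide{\beta}}{\nu}$ is nonzero, then $\pi\sem{\nnwide{\psi}}{\nu}$ is their sum and is nonzero by $+$-positivity; otherwise both vanish, so the induction hypotheses for $\alpha$ and $\beta$ force $\pi\sem{\alpha}{\nu}\neq0$ and $\pi\sem{\beta}{\nu}\neq0$, whence $\pi\sem{\psi}{\nu}=\pi\sem{\alpha}{\nu}\cdot\pi\sem{\beta}{\nu}\neq0$ since $K$ has no divisors of $0$. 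The case $\psi=\alpha\vee\beta$ is dual. For the quantifiers, $\psi=\exists x\,\alpha$ (so $\nnwide{\psi}=\forall x\,\nnwide{\alpha}$): if some instance $\pi\sem{\alpha}{\nu[x\mapsto a]}$ is nonzero, the existential sum is nonzero; otherwise every instance of $\alpha$ vanishes, so by the induction hypothesis $\pi\sem{\nnwide{\alpha}}{\nu[x\mapsto a]}\neq0$ for every $a\in A$, and the universal product $\prod_{a\in A}\pi\sem{\nnwide{\alpha}}{\nu[x\mapsto a]}$ is nonzero because $A$ is finite and $K$ has no zero divisors. The case $\psi=\forall x\,\alpha$ is dual.

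I do not expect a genuine obstacle, but two points need care. The first is formulating the correct strengthened statement: inducting on $\phi$ with only a conclusion about $\pi\sem{\phi}{}$ fails, because the step for $\wedge$ (and for $\forall$) cannot close without knowing about the dual side; one must carry both $\psi$ and $\nnwide{\psi}$ and quantify over valuations. The second is bookkeeping the two halves of positivity in the right places — additive positivity for $\vee,\exists$, the no-zero-divisor property for $\wedge,\forall$ — and checking that the finiteness of $A$ is exactly what legitimizes the ``product of nonzeros is nonzero'' step in the quantifier cases. The $\mathbb{Z}_4$ example given earlier, where $\pi\sem{\phi}{}=\pi\sem{\neg\phi}{}=0$, shows that dropping positivity breaks the conclusion, so every appeal to positivity must be accounted for.
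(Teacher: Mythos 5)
Your proof is correct. The paper states Proposition~\ref{prop:K-completeness} without giving a proof, and your argument --- strengthening to all NNF formulas and all valuations, carrying the De~Morgan dual alongside, and using $+$-positivity for the $\vee,\exists$ cases and the absence of zero divisors (plus finiteness of $A$) for the $\wedge,\forall$ cases --- is exactly the routine induction on NNF that the authors signal as their intended method after Proposition~\ref{prop:NNF}, so it fills the omitted proof in the expected way.
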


\section{A Provenance Semiring for FOL} 
\label{sec:prov-semiring-FOL}

We have claimed  Sect.~\ref{subsec:commutative-semirings} that
$\poly{\Nat}{Y}$, the commutative semiring freely generated by a set
$Y$ is used for provenance tracking. The elements of $Y$ label the
information whose propagation we wish to capture in provenance. This
works fine for \emph{positive} database query
languages~\cite{pods/GreenKT07} but difference/negation cause problems.
Here we shall use a variation on the idea of polynomials
in order to deal with negated facts in provenance analysis.

We construct a semiring whose elements can be identified with certain
polynomials that describe the provenance of FOL model checking. The
main insight is the use of indeterminates in ``positive-negative pairs''.
We show that the resulting polynomials provide a nicely dual interpretation
for provenance that captures model-checking proofs.
We illustrate with a running example. 


\subsection{Dual-Indeterminate Polynomials}
\label{subsec:dual-ind-poly}

Let $X,\nnX$ be two disjoint sets together with a one-to-one
correspondence $X\longleftrightarrow\nnX$. We denote by $p\in X$ and
$\nnp\in\nnX$ two elements that are in this correspondence.  We refer
to the elements of $X\cup\nnX$ as \textbf{provenance tokens} as they
will be used to label/annotate some of the ``data'', i.e., literals
over some ground values, via the concept of $K$-interpretation that
we defined previously. Indeed, if, as before, we fix a finite 
non-empty set $A$ and consider $\LitA=\FactsA\cup\NegFactsA$ then
we shall use $X$ for $\FactsA$ and $\nnX$ for $\NegFactsA$. By
convention, if we annotate $R(\bara)$ with the ``positive'' token $p$ 
then the ``negative'' token $\nnp$ can only be
used to annotate $\neg R(\bara)$, and vice versa. 
We refer to $p$ and $\nnp$ as \emph{complementary} tokens.

Further, we denote by $\poly{\Nat}{X,\nnX}$ the quotient
of the semiring of polynomials $\poly{\Nat}{X\cup\nnX}$ by the
congruence generated by the equalities $p\cdot\nnp=0$ for
all $p\in X$.\footnote{This is the same as quotienting by the ideal generated 
by the polynomials $p\nnp$ for all $p\in X$.}
Observe that two polynomials $\gothp,\gothq\in\poly{\Nat}{X\cup\nnX}$
are congruent if, and only if, they become identical after deleting from each of them
the monomials that contain complementary tokens. Hence, the 
congruence classes in $\poly{\Nat}{X,\nnX}$ are in one-to-one correspondence
with the polynomials in $\poly{\Nat}{X\cup\nnX}$
such that none of their monomials contain complementary tokens.
We shall call these \textbf{dual-indeterminate polynomials} although we might
often omit ``-indeterminate'' just use ``dual polynomials''.

The following is the universality property of the semiring of dual
polynomials:
\begin{proposition}
\label{prop:prov-univ}
For any commutative semiring $K$ and for any
$f:X\cup\nnX\rightarrow K$ such that 
$\forall p\in X\,f(p)\cdot f(\nnp)=0$ there exists
a unique semiring homomorphism 
$h:\poly{\Nat}{X,\nnX}\rightarrow K$ 
such that $\forall x\in X\cup\nnX\;h(x)=f(x)$.
\end{proposition}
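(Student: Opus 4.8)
The plan is to combine two universal properties: the freeness of the polynomial semiring $\poly{\Nat}{X\cup\nnX}$ and the universal property of a quotient by a congruence. Recall (item~7 of the list in Section~\ref{subsec:commutative-semirings}) that $\poly{\Nat}{X\cup\nnX}$ is the commutative semiring freely generated by $X\cup\nnX$. Hence the given map $f:X\cup\nnX\rightarrow K$ extends uniquely to a semiring homomorphism $\hat f:\poly{\Nat}{X\cup\nnX}\rightarrow K$ with $\hat f(x)=f(x)$ for every generator $x$. Writing $q:\poly{\Nat}{X\cup\nnX}\rightarrow\poly{\Nat}{X,\nnX}$ for the canonical map onto the congruence classes, the whole task reduces to showing that $\hat f$ factors as $h\circ q$ for a necessarily unique homomorphism $h$.

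Next I would verify that $\hat f$ respects the congruence, i.e.\ that the congruence $\sim$ generated by the pairs $\{(p\cdot\nnp,\,0):p\in X\}$ is contained in the kernel congruence $\ker\hat f=\{(\gothp,\gothq):\hat f(\gothp)=\hat f(\gothq)\}$. For each $p\in X$ the hypothesis gives
$$
\hat f(p\cdot\nnp)=\hat f(p)\cdot\hat f(\nnp)=f(p)\cdot f(\nnp)=0=\hat f(0),
$$
so every generating pair lies in $\ker\hat f$. Since the kernel of any semiring homomorphism is itself a semiring congruence, and $\sim$ is by definition the smallest congruence containing the generating pairs, it follows that $\sim\subseteq\ker\hat f$. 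By the universal property of quotients this yields a unique homomorphism $h:\poly{\Nat}{X,\nnX}\rightarrow K$ with $h\circ q=\hat f$; in particular $h(x)=h(q(x))=\hat f(x)=f(x)$ on generators, as required. Uniqueness is then immediate: the images $q(x)$ generate $\poly{\Nat}{X,\nnX}$ as a semiring, and a homomorphism is determined by its values on a generating set, so any $h'$ agreeing with $f$ on the generators coincides with $h$.

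The step I expect to require the most care is the inclusion $\sim\subseteq\ker\hat f$, and here the explicit normal form described just before the proposition is what makes it transparent: an element of $\poly{\Nat}{X,\nnX}$ is represented by a polynomial none of whose monomials contains a complementary pair $p,\nnp$, and passing to the quotient simply deletes the monomials that do. The content of the inclusion is exactly that $\hat f$ already sends each deleted monomial to $0$, since any monomial containing both $p$ and $\nnp$ has the factor $f(p)\cdot f(\nnp)=0$, so evaluating via $\hat f$ is insensitive to the deletion.

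An alternative, more computational route would bypass congruences entirely: define $h$ directly on the canonical representatives by evaluation at $f$, and check preservation of $+,\cdot,0,1$ by hand, the only nontrivial case being that a product of two clean polynomials may create complementary monomials, which are precisely the ones annihilated by the hypothesis. I would nevertheless present the abstract factorization argument as the main proof, since it isolates the single computation $f(p)\cdot f(\nnp)=0$ that actually carries the content.
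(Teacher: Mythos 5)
Your argument is correct: the paper itself gives no proof of this proposition, treating it as an immediate consequence of defining $\poly{\Nat}{X,\nnX}$ as a quotient of the free commutative semiring $\poly{\Nat}{X\cup\nnX}$ by the congruence generated by $p\cdot\nnp=0$, and your factorization through the quotient (checking that the generating pairs lie in the kernel congruence of $\hat f$, then invoking the universal property of quotients) is exactly the standard argument the authors leave implicit. The uniqueness step via generation by the $q(x)$ is also handled correctly.
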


We note that $\poly{\Nat}{X,\nnX}$ is $+$-positive, but not positive,
since it has divisors of $0$. Examples:
$$
p\cdot\nnp~=~0,~~~~(p+\nnq)\nnp q ~=~0,~~~~(p\nnq+\nnp q)(pq+\nnp\nnq)~=~0.
$$
However, keeping both $p$ and $\nnp$ around
and even using them in certain ``inconsistent''
$\poly{\Nat}{X, \nnX}$-interpretations 
can be very useful in provenance analysis, as we shall see 
in Sect.~\ref{subsec:rev-ex}.

\begin{definition}
A \textbf{provenance-tracking} interpretation is a 
$\poly{\Nat}{X,\nnX}$-interpretation 
$\pi:\LitA\rightarrow\poly{\Nat}{X,\nnX}$ such that
$\pi(\FactsA)\subseteq X\cup\{0,1\}$
and $\pi(\NegFactsA)\subseteq\nnX\cup\{0,1\}$.
\end{definition}

The idea is that if $\pi$ annotates a positive or negative fact with
a token, then we wish to track that fact through the model-checking
computation. On the other hand annotating with $0$ or $1$ is done when
we do not track the fact, yet we need to recall whether it holds or not
in the model.

\subsection{An Example and a Characterization}
\label{subsec:ex}

The vocabulary of directed graphs consists one binary predicate $E$
denoting directed edges. 
Consider, over this vocabulary, the following formula and sentence
$$
\domi(x) ~\equiv~
\forall y\: \bigl( x=y \vee (E(x,y)\wedge\neg E(y,x))\bigr),
~~~~~~~~~~~~~~~~
\phi~\equiv~ \forall x\,\neg\domi(x).
$$
$\domi(x)$ says that in a digraph with edge relation $E$ 
the vertex $x$ is ``dominant'' while $\phi$ says
that the digraph does not have a dominant vertex.

Consider also the digraph $\gothG$ depicted in Figure~\ref{fig:modelG}
with vertices $a,b,c$. The edges of the digraph are the
solid arrows and we wish to track their presence through
model-checking.  The dashed arrows corresponds to absent edges, whose
absence, however, we also wish to track. 
We do this with the 
provenance-tracking $\poly{\Nat}{X,\nnX}$-interpretation
$\beta:\LitV\rightarrow X\cup\nnX\cup\{0,1\}$ defined by\\

\begin{minipage}[t]{7cm}
\hspace*{1cm}
$
\beta(L) ~=~ \begin{cases}
           p  & \mathrm{if~} L=E(a,b)\\
           0  & \mathrm{if~} L=\neg E(a,b)\\
           q  & \mathrm{if~} L=E(b,c)\\
           0  & \mathrm{if~} L=\neg E(b,c)\\
           0  & \mathrm{if~} L=E(a,c) \\
        \nnr  & \mathrm{if~} L=\neg E(a,c)
            \end{cases}
$
\end{minipage}
\begin{minipage}[t]{7cm}
$        =~\begin{cases}
           0  & \mathrm{if~} L= E(c,b)\\
        \nns  & \mathrm{if~} L=\neg E(c,b)\\
           t  & \mathrm{if~} L= E(b,a)\\
           0  & \mathrm{if~} L=\neg E(b,a)\\
           0  & \mathrm{for~the~other~positive~facts}\\
           1  & \mathrm{for~the~other~negative~facts.}
            \end{cases}
$
\end{minipage}\\

So, for example, $\beta(E(c,a))=\beta(E(c,c))=\ldots=0$ and also
$\beta(\neg E(c,a))=\beta(\neg E(c,c))=\ldots=1$. 
Note that $\beta$ is \ModDef\ in the sense of
Definition~\ref{def:int-model-defining}
and that the model it defines is precisely $\gothG$.

\begin{figure}
\centering
\begin{tikzpicture}
\begin{scope}
  \node[shape=circle,draw=black,very thick] (A) at (0,0) {$a$};
  \node[shape=circle,draw=black,very thick] (B) at (2,1.5) {$b$};
  \node[shape=circle,draw=black,very thick] (C) at (4,0) {$c$};

 \draw[->,very thick] 
      (A) edge[bend right=20] node[right] {$\,p$} (B);
 \draw[->,very thick] 
      (B) edge[bend right=20] node[left] {$q\,$} (C);
 \draw[->,very thick,dashed] (A) edge node[below] {$\nnr$} (C);
 \draw[->,very thick,dashed] 
      (C) edge[bend right=20] node[right] {$\,\nns$} (B);
 \draw[->,very thick] 
      (B) edge[bend right=20] node[left] {$t\,$} (A);
\end{scope}
\end{tikzpicture}
\vspace*{-3mm}
\caption{The model $\gothG$}
\label{fig:modelG}
\end{figure}

The assumptions made in the definition of $\beta$ 
indicate that we choose to track positive facts
like $E(b,c)$ and negative facts like $\neg E(a,b)$, etc., as they are
used in establishing the truth of some sentence in $\gothG$.
They also indicate that we accept, and thus do not
track, the absence of the other potential edges such as $E(c,a)$.
We think of data annotated with $0$ as being ``forget-about-it'' absent and of 
data annotated with 1 as ``available for free'' present.

Clearly, $\gothG\models\phi$, but how can we justify this in terms
of the facts, negative or positive, that hold in the model?
By computing the semantics of the sentence $\phi$
under the interpretation $\beta$ we will obtain \emph{provenance information}
for the result $\gothG\models\phi$. Clearly
$$
\nnf(\phi)~\equiv~ 
\forall x\,\exists y\: \bigl( x\neq y \wedge (\neg E(x,y)\vee E(y,x))\bigr)
$$
and therefore
$$
\beta\sem{\phi}{} ~=~ \beta\sem{\nnf(\phi)}{} ~=~
(\nnr+t)\cdot p\cdot(1+q+\nns) ~=~ p\nnr+pt+pq\nnr+pqt+p\nnr\nns+p\nns t.
$$

Each of the monomials of the dual polynomial $\beta\sem{\phi}{}$
has coefficient 1
\footnote{In this example all the monomial
coefficients and all the exponents are 1. This is certainly not the case
in general. In fact, it is possible to show that any dual polynomial
can be computed as some provenance, with suitable choices of sentence,
model, and interpretation.}
and each corresponds to a different (model-checking) 
proof tree of $\phi$ from the literals
described by the monomial.
For example, the monomial $pt$ corresponds to a proof tree of $\phi$ 
in which the fact
$E(a,b)$ is used to deny the dominance of $b$, the fact
$E(b,a)$ is used to deny the dominance of $a$, and the
negative fact $\neg E(c,a)$, which is accepted without 
tracking---it has provenance 1---is used to deny the dominance of $c$.

Note that what we call proof tree here involves formulae in NNF
and has inference rules corresponding
to model checking conjunction, disjunction, universal and existential
quantifiers. We illustrate with the proof tree 
corresponding to another monomial, $p\nnr\nns$,
using the following formula abbreviations:
\begin{eqnarray*}
\denydom(x,y) &\equiv& \bigl(x\neq y \wedge (\neg E(x,y)\vee E(y,x))\bigr)
~~~~~~~~~~~~y~\mathrm{denies~dominance~of}~x\\
\\
\notdom(x)   &\equiv& 
       \exists y\: \bigl( x\neq y \wedge (\neg E(x,y)\vee E(y,x))\bigr)
~~~~~~~~~~~x~\mathrm{is~not~dominant}\\
\\
\noVdom      &\equiv& 
     \forall x\,\exists y\: \bigl(x\neq y \wedge (\neg E(x,y)\vee E(y,x))\bigr)
~~~~~~~\mathrm{no~vertex~is~dominant}\\
\end{eqnarray*}

\vspace*{-5mm}
With these, the proof tree corresponding to $p\nnr\nns$ is:

\bigskip

\EnableBpAbbreviations

\noindent
\AXC{$a\neq b$}
\AXC{$\neg E(a,c)~~~[\nnr]$}
\UIC{$\neg E(a,c)\vee E(c,a)$}
\BIC{$\denydom(a,c)$}
\UIC{$\notdom(a)$}
\AXC{$b\neq c$}
\AXC{$E(a,b)~~~[p]$}
\UIC{$\neg E(b,a)\vee E(a,b)$}
\BIC{$\denydom(b,a)$}
\UIC{$\notdom(b)$}
\AXC{$c\neq a$}
\AXC{$\neg E(c,b)~~~[\nns]$}
\UIC{$\neg E(c,b)\vee E(b,c)$}
\BIC{$\denydom(c,b)$}
\UIC{$\notdom(c)$}
\TIC{$\noVdom$}
\DP\\

The following proposition summarizes the situation.

\begin{proposition}
\label{prop:all-proofs-mod-def}
Let $\beta:\LitA\rightarrow\poly{\Nat}{X,\nnX}$ be a provenance-tracking
\ModDef\ interpretation,
and let $\phi$ be an FOL sentence. 
Then, the dual polynomial
$\beta\sem{\phi}{}$ describes \emph{all} the proof
trees that verify $\phi$ using premises from among the literals that
that $\beta$ maps to provenance tokens or to 1 (i.e., from the
literals that hold in $\Abeta$).
Specifically, each monomial $m\,x_1^{m_1}\cdots x_k^{m_k}$ 
corresponds to $m$ distinct proof trees
that use $m_1$ times a literal that $\beta$ annotates by $x_1$, \ldots, 
and $m_k$ times a literal annotated by $x_k$, as well
as any number of the literals annotated with 1.
In particular, $\beta\sem{\phi}{}\neq0$ if, and only if, some proof tree exists, and if, and only if,
$\Abeta\models\phi$.
\end{proposition}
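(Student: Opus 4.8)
The plan is to prove Proposition~\ref{prop:all-proofs-mod-def} by structural induction on the NNF of $\phi$, establishing a precise bijection between the monomials of $\beta\sem{\phi}{}$ (counted with their coefficients) and the proof trees witnessing $\Abeta\models\phi$. By Proposition~\ref{prop:NNF} I may assume $\phi$ is in NNF, so the only connectives to handle are literals, equality/inequality atoms, $\wedge$, $\vee$, $\exists$, and $\forall$; this is exactly what lets me avoid negation entirely. First I would make precise the notion of ``proof tree in NNF'' alluded to in the running example: a tree whose leaves are literals (annotated by the token $\beta$ assigns them) or equality/inequality atoms, whose $\vee$- and $\exists$-nodes select one child, and whose $\wedge$- and $\forall$-nodes branch over both conjuncts resp.\ all elements $a\in A$. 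The weight of a proof tree is the product over its literal leaves of the corresponding tokens, and I want to show $\beta\sem{\phi}{}$ equals the sum of these weights over all valid trees. Since $\beta$ is provenance-tracking, each literal leaf contributes a token in $X\cup\nnX$, or $1$ (an untracked literal that holds, contributing nothing to the monomial), or $0$ (a literal that fails, so no proof tree uses that leaf).

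The inductive step is mostly a matter of matching the semantic clauses of Definition~\ref{def:sem} to tree-combination operations. For a literal $L$, $\beta\sem{L}{\nu}=\beta(L)$ is a single token (or $1$, or $0$), corresponding to the one-leaf tree, which exists iff $\Abeta\models L$ because $\beta$ is \ModDef; the equality/inequality case is similar with values $0,1$. For $\phi\wedge\psi$, the product $\beta\sem{\phi}{\nu}\cdot\beta\sem{\psi}{\nu}$ distributes so that each monomial is a product of a $\phi$-monomial and a $\psi$-monomial, matching the pairing of a $\phi$-subtree with a $\psi$-subtree under a $\wedge$-node; the coefficients multiply, matching the count of pairs. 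For $\phi\vee\psi$ the sum collects trees from either disjunct. The quantifier cases are the finitary analogues: $\exists$ sums over the $|A|$ instantiations $\nu[x\mapsto a]$, and $\forall$ takes the product over them, exactly mirroring how a $\forall$-node assembles one subtree for each $a\in A$. The finiteness of $A$ is what makes the $\forall$-product and $\exists$-sum well-defined, as the text emphasizes.

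The main obstacle, and the step deserving genuine care, is the behavior of the product relation $p\cdot\nnp=0$ in $\poly{\Nat}{X,\nnX}$ and its interaction with the \ModDef\ hypothesis. A monomial of $\beta\sem{\phi}{}$ could in principle contain complementary tokens $p$ and $\nnp$ from two different leaves of a conjunction/universal, which would annihilate to $0$ in the dual semiring; I must argue that this never discards a genuine proof tree. The key point is that a single proof tree uses, at its literal leaves, only literals that actually hold in $\Abeta$, and since $\Abeta$ is a model it cannot satisfy both $R(\bara)$ and $\neg R(\bara)$; hence no proof tree's weight can contain a complementary pair, so the $p\nnp=0$ collapse only kills ``phantom'' monomials that correspond to no proof tree. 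Conversely, the recalled fact that the congruence merely deletes monomials containing complementary tokens (from the characterization in Sect.~\ref{subsec:dual-ind-poly}) guarantees that the surviving monomials are precisely the weights of honest proof trees, with no spurious cancellation of coefficients, since $\Nat$ is $+$-positive. Once the bijection is established, the final clause is immediate: $\beta\sem{\phi}{}\neq0$ iff the sum of tree-weights is nonzero iff at least one proof tree exists, and by the sanity check (Proposition~\ref{prop:strong}, using that $\beta$ is \ModDef\ and recalling $\dagger$ applied along a homomorphism recovers $\piA$) this holds iff $\Abeta\models\phi$.
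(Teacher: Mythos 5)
Your proof is essentially correct, but it takes a genuinely different route from the paper. The paper gives no self-contained argument at this point: it explicitly defers, deriving Proposition~\ref{prop:all-proofs-mod-def} from the more general Proposition~\ref{prop:all-proofs} about \ModComp\ interpretations, via the specialization machinery of Sect.~\ref{subsec:prop-prov} (a \ModDef\ provenance-tracking interpretation arises as $\refi{\pi}{\gothA}$ for a \ModComp\ $\pi$ and a compatible $\gothA$, and Corollary~\ref{cor:model-check} then transfers the monomial--proof-tree correspondence). You instead prove the statement directly by structural induction on $\nnf(\phi)$, setting up the bijection between monomials (with multiplicity) and proof trees. The two approaches buy different things: the paper's factoring through \ModComp\ interpretations is what powers the reverse-analysis applications, whereas your direct induction is more elementary and, importantly, makes explicit the one point the paper leaves implicit --- that the congruence $p\cdot\nnp=0$ never annihilates the weight of a genuine proof tree, because a \ModDef\ provenance-tracking $\beta$ sends one literal of each complementary pair to $0$, so at most one of $p,\nnp$ ever occurs in the image of $\beta$ and the whole computation already lives in $\poly{\Nat}{X\cup\nnX}$. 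That observation is exactly the right key step and is what a direct proof must supply.

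One caveat: your final appeal to Proposition~\ref{prop:strong} is misplaced if read as applying to $\beta$ over $\poly{\Nat}{X,\nnX}$ --- the paper warns immediately after the statement that this semiring is not positive, so $\dagger$ is not a homomorphism there and Proposition~\ref{prop:strong} does not apply. The step you need (``some proof tree exists iff $\Abeta\models\phi$'') should instead be drawn from Proposition~\ref{prop:sanity} via the canonical counting interpretation $\pisA$ for $\Abeta$ (or from Proposition~\ref{prop:strong} applied to $\Nat$, which \emph{is} positive). Since your bijection already identifies the proof trees in question with the model-checking proof trees for $\Abeta\models\phi$, this is a citation repair rather than a gap in the argument.
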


Note that since $\poly{\Nat}{X,\nnX}$ is not positive this proposition
does not follow from Proposition~\ref{prop:strong}.
(Nor does this contradict Proposition~\ref{prop:refi} (b) because 
provenance-tracking interpretations have a special form.)
Nonetheless, albeit not positive,
$\poly{\Nat}{X,\nnX}$ has many remarkable properties and
this proposition is a corollary of a more general one that we shall 
state in Sect.~\ref{subsec:prop-prov}.

\subsection{From Provenance to Confidence}
\label{subsec:conf}

Recall from Sect.~\ref{subsec:commutative-semirings} the Viterbi semiring
$\Vit$. We think of the elements of $\Vit$ as confidence scores. 
Going back to the example in Sect.~\ref{subsec:ex}, and assuming
specific confidence scores for the literals that $\gothG$ makes true,
and that we track, we wish to compute a confidence score for
$\gothG\models\phi$.

Specifically, consider the $\Vit$-interpretation 
$\gamma:\LitV\rightarrow[0,1]$ defined by
$$
\gamma(E(a,b))=\gamma(E(b,c))=0.9,
~~~~~~~~
\gamma(E(b,a))=0.2,
~~~~~~~~~~~~~
\gamma(\neg E(a,c))=\gamma(\neg E(c,b))=0.6,
$$
and in addition, for any \emph{other} positive fact we have
$\gamma(E(\_\,,\_))=0$ and for any \emph{other} negative fact 
we have $\gamma(\neg E(\_\,,\_))=1$.

With this we could use Definition~\ref{def:sem}
to compute $\gamma\sem{\phi}{}\in[0,1]$, which is the desired
confidence score.

However, since we have already computed in Sect.~\ref{subsec:ex} the provenance
$\beta\sem{\phi}{}$ we can take advantage of the Fundamental Property
(Proposition~\ref{prop:hom}) via a homomorphism whose existence is
guaranteed by Proposition~\ref{prop:prov-univ}.

We define $f:X\cup\nnX\rightarrow[0,1]$ by
$$
f(p)=f(q)=0.9,
~~~~~~~
f(t)=0.2,
~~~~~~~~~~~~~~~~
f(\nnr)=f(\nns)=0.6,
$$
by $f(x)=0$ for $x\not\in\{p,q,t\}$, and by $f(\nnx)=1$ for 
$\nnx\not\in\{\nnr,\nns\}$. The condition on $f$ in 
Proposition~\ref{prop:prov-univ} is satisfied, hence $f$ can be extended
to a homomorphism $h:\poly{\Nat}{X,\nnX}\rightarrow\Vit$. From the definition
of $f$ we have $h\circ\beta=\gamma$. By the Fundamental Property
$$
\gamma\sem{\phi}{}~=~ h(\beta\sem{\phi}{}).
$$
Hence the score we wish to compute can be obtained by applying the homomorphism
$h$ to the dual polynomial $\beta\sem{\phi}{}=p\nnr+pt+pq\nnr+pqt+p\nnr\nns+p\nns t$. It is easier to use the 
factored form of $\beta\sem{\phi}{}$:
$$
h(p(\nnr+t)(1+q+\nns)) ~=~ 0.9\cdot\max(0.6,\;0.2)\cdot\max(1,\;0.9,\;0.6)
~=~ 0.54.
$$
In general, confidence calculation may be only one of the analyses that we 
wish to perform. When these analyses are based on semiring calculations
we can compute the provenance just once and then evaluate it in multiple
semiring and under multiple valuations, by virtue of the Fundamental Property.

\subsection{Detailed Provenance Analysis: Top-Secret Proofs}
\label{subsec:clear}

We describe here another kind of provenance analysis that we can perform on
in conjunction with interpretation in various semiring.
Recall from Sect.~\ref{subsec:commutative-semirings} the access control
semiring $\Access$. Its elements are interpreted as \emph{clearance
  levels}, from lowest to highest
$\Pub<\Cnf<\Sec<\Tsec<0$. For example, administrators would assign 
clearance levels
to the different items in the input data. The resulting clearance
level for the output of a computation determines which users get to
access that output.  In the context of this paper there would be an
assignment of clearance levels to literals. 

Going back to the example in Sect.~\ref{subsec:ex}, 
consider the $\Access$-interpretation 
$\alpha:\LitV\rightarrow\Access$ defined by
$$
\alpha(E(a,b))=\alpha(E(b,c))=\alpha(E(b,a))=\Pub,
~~~~~~~~~~~~~
\alpha(\neg E(a,c))=\alpha(\neg E(c,b))=\Tsec,
$$
and in addition, for any \emph{other} positive fact we have
$\alpha(E(\_\,,\_))=0$ and for any \emph{other} negative fact 
we have $\alpha(\neg E(\_\,,\_))=\Pub$.

As in Sect.~\ref{subsec:conf} we have
$\alpha\sem{\phi}{}=h(p\nnr+pt+pq\nnr+pqt+p\nnr\nns+p\nns t)$,
where $h$ is the unique homomorphism $\poly{\Nat}{X,\nnX}\rightarrow\Access$
such that $h(p)=h(q)=h(t)=\Pub$,
$h(\nnr)=h(\nns)=\Tsec$, and otherwise equals $0$ on the rest of $X$
and equals $\Pub$ on the rest of $\nnX$. 

We can see that $\alpha\sem{\phi}{}=\Pub$ but we can also perform a
more detailed analysis in which we can associate clearance levels to
individual proof trees 
Thus, while it will be
publicly known that $\gothG\models\phi$, those with top-secret
clearance can also know that $p\nnr$ describes a proof of the
assertion $\gothG\models\phi$. This may become relevant if we have
particularly high confidence (as described above in Sect.~\ref{subsec:conf})
in the literals that $p$ and $\nnr$ annotate, that is, in the presence
of the edge from $a$ to $b$ and in the absence of an edge from $a$ to $c$.

\section{Reverse Provenance Analysis}
\label{sec:reverse}

There are limitations to what we can do with the provenance of a
model-checking assertion $\gothA\models\phi$ for a given $\gothA$.
It is even more interesting to consider provenance-tracking interpretations
that allow us to \emph{choose}, from among multiple models, the ones
that fulfill various desiderata.

\begin{figure}
\centering
\begin{minipage}[t]{.5\textwidth}
\centering
\begin{tikzpicture}
\begin{scope}
  \node[shape=circle,draw=black,very thick] (A) at (0,0) {$a$};
  \node[shape=circle,draw=black,very thick] (B) at (2,1.5) {$b$};
  \node[shape=circle,draw=black,very thick] (C) at (4,0) {$c$};

 \draw[->,very thick,dotted] 
      (A) edge[bend right=20] node[right] {$p,\nnp$} (B);
 \draw[->,very thick,dotted] 
      (B) edge[bend right=20] node[left] {$q,\nnq\,$} (C);
 \draw[->,very thick,dotted] (A) edge node[below] {$r,\nnr$} (C);
 \draw[->,very thick,dotted] 
      (C) edge[bend right=20] node[right] {$s,\nns$} (B);
 \draw[->,very thick,dotted] 
      (B) edge[bend right=20] node[left] {$t,\nnt$} (A);
\end{scope}
\end{tikzpicture}
\vspace*{-3mm}
\captionof{figure}{Provenance tracking assumptions}
\label{fig:assump}
\end{minipage}%
\begin{minipage}[t]{.5\textwidth}
\centering
\begin{tikzpicture}
\begin{scope}
  \node[shape=circle,draw=black,very thick] (A) at (0,0) {$a$};
  \node[shape=circle,draw=black,very thick] (B) at (2,1.5) {$b$};
  \node[shape=circle,draw=black,very thick] (C) at (4,0) {$c$};

 \draw[->,very thick] 
      (A) edge[bend right=20] node[right] {$\,p$} (B);
 \draw[->,very thick] 
      (B) edge[bend right=20] node[left] {$q\,$} (C);
 \draw[->,very thick] (A) edge node[below] {$r$} (C);
 \draw[->,very thick] 
      (C) edge[bend right=20] node[right] {$\,s$} (B);
 \draw[->,very thick] 
      (B) edge[bend right=20] node[left] {$t\,$} (A);
\end{scope}
\end{tikzpicture}
\vspace*{-3mm}
\captionof{figure}{The model $\gothF$}
\label{fig:modelF}
\end{minipage}%
\end{figure}

\subsection{A Reverse Analysis Example}
\label{subsec:rev-ex}

Let $V=\{a,b,c\}$ be a set of ground values. As before, these will eventually
play the role of the vertices of a digraph.
However, we do not yet specify
a set of edges, i.e., we do not specify a finite model with
universe $V$. Instead, as illustrated by the dotted edges in 
Figure~\ref{fig:assump}, we supply a set of provenance tokens 
$X=\{p,q,r,s,t\}$ that corresponds
to the \emph{potential presence} of some edges that we wish to track.
Therefore, $\nnX=\{\nnp,\nnq,\nnr,\nns,\nnt\}$ are the provenance tokens 
allowing us to track the \emph{potential absence} of the same edges.
These \textbf{provenance tracking assumptions} can be formalized via a 
provenance-tracking $\poly{\Nat}{X,\nnX}$-interpretation.

Define $\pi:\LitV\rightarrow X\cup\nnX\cup\{0,1\}$ by

\begin{minipage}[t]{7cm}
\hspace*{1cm}
$
\pi(L) ~=~ \begin{cases}
           p  & \mathrm{if~} L=E(a,b)\\
        \nnp  & \mathrm{if~} L=\neg E(a,b)\\
           q  & \mathrm{if~} L=E(b,c)\\
        \nnq  & \mathrm{if~} L=\neg E(b,c)\\
           r  & \mathrm{if~} L=E(a,c) \\
        \nnr  & \mathrm{if~} L=\neg E(a,c)
            \end{cases}
$
\end{minipage}
\begin{minipage}[t]{7cm}
$        =~\begin{cases}
           s  & \mathrm{if~} L= E(c,b)\\
        \nns  & \mathrm{if~} L=\neg E(c,b)\\
           t  & \mathrm{if~} L= E(b,a)\\
        \nnt  & \mathrm{if~} L=\neg E(b,a)\\
           0  & \mathrm{for~the~other~positive~facts}\\
           1  & \mathrm{for~the~other~negative~facts.}
            \end{cases}
$
\end{minipage}\\

So, for example, $\pi(E(c,a))=\pi(E(c,c))=\ldots=0$ and also
$\pi(\neg E(c,a))=\pi(\neg E(c,c))=\ldots=1$. 
This particular interpretation does not feature a positive fact annotated
with 1 but we could have just as well had $\pi(E(a,b))=1$ and 
$\pi(\neg E(a,b))=0$ if we chose to assume that edge without tracking it.

Note that $\pi$ is not \ModDef\ (in the sense of
Definition~\ref{def:int-model-defining}), i.e., it does not correspond
to any \emph{single} model. 
As we shall see, this is not a bug but a feature (!), as it
will allow us to consider, under the given provenance assumptions,
multiple models that can satisfy a sentence.

Now we compute the semantics of the sentence $\phi$ from Sect.~\ref{subsec:ex},
under this interpretation and we obtain 
$$
\pi\sem{\phi}{} ~=~ 
(\nnp+\nnr+t)\cdot(p+\nnq+s+\nnt)\cdot(1+q+r+\nns).
$$

If we multiply these three expressions and we apply
$p\nnp=q\nnq=r\nnr=s\nns=0$ we get a polynomial with 
$48-4-3-3-4=34$ monomials (the reader shall be spared the
trouble of admiring it). As in Sect.~\ref{subsec:ex},
each of these monomials has coefficient 1
and (as shown in Sect.~\ref{subsec:prop-prov})
each corresponds to a different proof tree of $\phi$ from the literals
described by the monomial.

For example, the monomial $pqt$ corresponds to a proof tree of $\phi$ 
in which the fact
$E(b,a)$ is used to deny the dominance of $a$, the fact
$E(a,b)$ is used to deny the dominance of $b$, and the fact
$E(b,c)$ is used to deny the dominance of $c$. Recalling the
notations from Sect.~\ref{subsec:ex}, note that the same monomial
is part of the dual polynomial $\beta\sem{\phi}{}$ and that
the same proof tree justifies $\gothG\models\phi$. Note also
that setting $r=s=\nnp=\nnq=\nnt=0$ in the definition
of $\pi$ gives the definition of $\beta$. Doing the same in 
$\pi\sem{\phi}{}$ gives
$$
(0+\nnr+t)\cdot(p+0+0+0)\cdot(1+q+0+\nns) 
~=~
(\nnr+t)\cdot p\cdot(1+q+\nns),
$$ 
which is the same as the polynomial $\beta\sem{\phi}{}$ obtained with the
\ModDef\ interpretation $\beta$ which corresponds to the model $\gothG$. In
this sense, $\pi$ is a ``generalization'' of $\beta$, or, 
$\beta$ can be obtained by \emph{specializing} $\pi$.
All this will be made precise in full generality
in Sect.~\ref{subsec:prop-prov} while here we explore two other interesting
specializations of $\pi$.

One of the monomials in $\pi\sem{\phi}{}$ is $\nnp\nnq$.  This means
that we can find a specialization of $\pi$ that is \ModDef\ and that defines,
in fact, 
a model with \emph{no} positive information, namely the digraph with
vertices $V$ and no edges.
Hence, denoting with $\gothE$
this no-edge model, we have $\gothE\models\phi$.
How many proof trees verify that $\gothE\models\phi$?
The specialization $\pi_1$ that we are after corresponds to
setting $p=q=r=s=t=0$. This gives
$$
\pi_1\sem{\phi}{} ~=~ 
(\nnp+\nnr)\cdot(\nnq+\nnt)\cdot(1+\nns),
$$
which is a polynomial with 8 monomials, each with coefficient 1. It follows
that there are 8 distinct proof trees for $\gothE\models\phi$.

One can also figure out that $pqt, prt, qst, rst$ are among the
monomials in $\pi\sem{\phi}{}$. This means that we can find another
specialization of $\pi$ that is also \ModDef\ and that defines a model with
\emph{maximum} positive information (allowed by $\pi$), namely the 
digraph with
vertices $V$ and edges $E(a,b),E(b,c),E(a,c),E(c,b)$ and $E(b,a)$.
Let's denote with $\gothF$
this all-allowed-edges model (see Figure~\ref{fig:modelF}). 
How many proof trees verify that $\gothF\models\phi$?
The specialization $\pi_2$ that we look for here 
corresponds to setting $\nnp=\nnq=\nnr=\nns=\nnt=0$. This gives
$$
\pi_2\sem{\phi}{} ~=~ 
t\cdot(p+s)\cdot(1+q+r),
$$
which is a polynomial with 6 monomials, each with coefficient 1, 
hence there are 6 proof trees
for this.

Finally, we also wish to consider for this example 
the provenance of the \emph{negation}
of the sentence $\phi$ considered above, i.e., the sentence
$\neg\phi$ that says that the digraph \emph{has} a dominant vertex:
$$
\neg \phi~\equiv~ \neg \forall x\,\neg\domi(x).
$$
Since $\domi(x) \equiv 
\forall y\: \bigl(x=y \vee (E(x,y)\wedge\neg E(y,x))\bigr)
$
is already in NNF, we have $\nnf(\neg\phi)\equiv\exists x\,\domi(x)$.
We compute the semantics of this sentence under the same interpretation:
$$
\pi\sem{\neg\phi}{}~=~ pr\nnt + \nnp q\nns t + s\nnq\nnr\cdot0
                   ~=~ pr\nnt + \nnp q\nns t.
$$
Thus, under the provenance tracking assumptions we have made,
there are only two proof trees for $\neg\phi$.
\footnote{In the polynomials featured in this example all the monomial
coefficients and all the exponents are 1. This is certainly not the case
in general. In fact, it can be shown that any dual polynomial results
from suitably chosen sentences and interpretations. 
We shall come back later to this.}


\subsection{Properties of Provenance} 
\label{subsec:prop-prov}

In this subsection all interpretations are provenance-tracking, unless
another semiring is specified.  The interpretation exhibited
in Sect.~\ref{subsec:rev-ex} belongs to a class that merits its own
definition.

\begin{definition} A provenance-tracking interpretation
$\pi:\LitA\rightarrow\poly{\Nat}{X,\nnX}$
is said to be \bfModComp\
if for each fact $R(\bara)$ one of the
following three holds:
\begin{enumerate}
\item $\exists z\in X$
s.t $\pi(R(\bara))=z$ and $\pi(\neg R(\bara))=\nnz$,
or
\item $\pi(R(\bara))=0$ and $\pi(\neg R(\bara))=1$, or
\item $\pi(R(\bara))=1$ and $\pi(\neg R(\bara))=0$
\end{enumerate}
\end{definition}

As promised, we state a more powerful
version of Proposition~\ref{prop:all-proofs-mod-def}
(which was about provenance-tracking \ModDef\ interpretations).

\begin{proposition}
\label{prop:all-proofs}
Let $\pi:\LitA\rightarrow\poly{\Nat}{X,\nnX}$ be a \ModComp\
interpretation and let $\phi$ be an FOL sentence. Then,
$\pi\sem{\phi}{}$ describes \emph{all} the proof
trees that verify $\phi$ using premises from among the literals that
$\pi$ maps to provenance tokens or to 1.  Specifically, each monomial
$m\,x_1^{m_1}\cdots x_k^{m_k}$ corresponds to $m$ distinct proof trees
that use $m_1$ times a literal annotated by $x_1$, \ldots, and $m_k$ times
a literal annotated by $x_k$., where $x_1,\ldots,x_k\in X\cup\nnX$.
In particular, when $\pi\sem{\phi}{}=0$ no proof tree exists.
\end{proposition}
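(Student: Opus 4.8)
The plan is to reduce to negation normal form and then induct on the structure of the resulting formula, carrying along a precise bookkeeping of proof trees. By Proposition~\ref{prop:NNF} we may replace $\phi$ by $\nnf(\phi)$, so it suffices to treat formulae built from literals and (in)equality atoms using $\wedge,\vee,\exists,\forall$, and to prove the statement for all subformulae together with valuations $\nu:\Vars\to A$ (the free variables force this strengthening). A \emph{proof tree} for a pair $(\psi,\nu)$ is the syntactic object suggested by the example in Sect.~\ref{subsec:ex}: its leaves are either (in)equality atoms that hold under $\nu$ or literals $L$ with $\pi(L)\neq 0$; a $\vee$-node picks a proof tree of one disjunct; a $\wedge$-node pairs proof trees of both conjuncts; an $\exists$-node picks a value $a\in A$ and a proof tree of the body under $\nu[x\mapsto a]$; and a $\forall$-node collects one proof tree of the body under $\nu[x\mapsto a]$ for every $a\in A$. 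To each proof tree $T$ I assign the monomial $\mathrm{mon}(T)\in\poly{\Nat}{X,\nnX}$ obtained as the product, over all literal leaves $L$ of $T$, of $\pi(L)$; leaves annotated by $1$ contribute trivially, and (in)equality leaves contribute $1$.

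The invariant I will prove by induction is the identity $\pi\sem{\psi}{\nu}=\sum_{T}\mathrm{mon}(T)$, the sum ranging over all proof trees $T$ for $(\psi,\nu)$ and computed in the quotient semiring $\poly{\Nat}{X,\nnX}$. The base cases are immediate from Definition~\ref{def:sem}: a literal leaf contributes its token (or $1$), and an (in)equality atom contributes $1$ or $0$ according to whether it holds. The four inductive steps mirror the four operations exactly: $\pi\sem{\psi_1\vee\psi_2}{\nu}$ and $\pi\sem{\exists x\,\psi}{\nu}$ are sums, matching the disjoint union of proof trees over the two disjuncts, respectively over the witnesses $a\in A$; while $\pi\sem{\psi_1\wedge\psi_2}{\nu}$ and $\pi\sem{\forall x\,\psi}{\nu}$ are products, and distributivity turns the product of the two (respectively $|A|$) sums into a sum over all ways of independently choosing a sub-proof-tree in each factor, with $\mathrm{mon}$ of the combined tree equal to the product of the $\mathrm{mon}$'s. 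Each step is a routine application of the induction hypothesis together with associativity, commutativity and distributivity in $\poly{\Nat}{X,\nnX}$.

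The genuinely new phenomenon, relative to the \ModDef\ case of Proposition~\ref{prop:all-proofs-mod-def}, appears in the product steps and is exactly where the \ModComp\ hypothesis and the dual structure earn their keep. When two sub-proof-trees are combined at a $\wedge$- or $\forall$-node, their monomials are multiplied in the quotient, and this product is $0$ precisely when the combined tree uses some fact $R(\bara)$ together with its negation $\neg R(\bara)$: by \ModComp, the only way both can occur as leaves is in case~(1), where they carry complementary tokens $z,\nnz$, and $z\cdot\nnz=0$ kills the monomial (in cases~(2) and~(3) one of the two literals has $\pi$-value $0$ and so is unavailable as a leaf). Thus the quotient relation $p\nnp=0$ silently discards exactly the \emph{inconsistent} proof trees --- those not realizable in any single model compatible with $\pi$ --- while every consistent tree survives with its monomial intact. (Under a \ModDef\ interpretation no fact and its negation are simultaneously available, so no cancellation ever occurs and the present argument specializes to the earlier one.) This is the step I expect to require the most care to state cleanly.

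Finally, to pass from the identity $\pi\sem{\phi}{}=\sum_T\mathrm{mon}(T)$ to the monomial-by-monomial count in the statement, I use the explicit description of the quotient recorded in Sect.~\ref{subsec:dual-ind-poly}: a polynomial over $X\cup\nnX$ represents its congruence class by simply deleting every monomial that contains a complementary pair, and nothing else is identified. Since coefficients lie in $\Nat$ there is no subtractive cancellation, so distinct surviving monomials stay distinct and their coefficients merely accumulate the contributions of individual trees. Hence the coefficient $m$ of a monomial $x_1^{m_1}\cdots x_k^{m_k}$ in $\pi\sem{\phi}{}$ is precisely the number of consistent proof trees whose literal leaves realize that multiset of tokens --- which is the assertion --- and $\pi\sem{\phi}{}=0$ says that no such tree exists.
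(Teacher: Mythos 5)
Your proof is correct. The paper states Proposition~\ref{prop:all-proofs} without giving a proof, but your argument---reduction to $\nnf(\phi)$ via Proposition~\ref{prop:NNF}, structural induction establishing $\pi\sem{\psi}{\nu}=\sum_T\mathrm{mon}(T)$, and the observation that the congruence $p\nnp=0$ deletes exactly the monomials of trees using a fact together with its negation (the one place the \ModComp\ hypothesis and the dual-indeterminate quotient are needed, and consistent with how Example~\ref{ex:tautology} is read)---is precisely the induction-on-NNF methodology the paper announces after Proposition~\ref{prop:NNF} and uses for Proposition~\ref{prop:hom}, so it is evidently the intended argument.
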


\begin{corollary}
\label{cor:count}
Let $\pi$ be a \ModComp\ interpretation. Then, the sum of
the monomial coefficients in $\pi\sem{\phi}{}$ counts 
the number of proof trees that verify $\phi$ using premises 
from among the literals that $\pi$ maps to provenance tokens or to 1.
The same count can be obtained from an $\Nat$-interpretation 
as $(h\circ\pi)\sem{\phi}{}\in\Nat$ where 
$h:X\cup\nnX\cup\{0,1\}\rightarrow\Nat$ is defined by $h(0)=0$ and
$h(p)=h(\nnp)= h(1)=1$.
\end{corollary}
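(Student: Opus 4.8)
The plan is to separate the two assertions and obtain both from Proposition~\ref{prop:all-proofs}. For the counting assertion I would argue as follows. By Proposition~\ref{prop:all-proofs} every monomial $m\,x_1^{m_1}\cdots x_k^{m_k}$ of the dual polynomial $\pi\sem{\phi}{}$ accounts for exactly $m$ distinct proof trees, and distinct monomials account for disjoint families of trees. Summing the per-monomial contributions therefore enumerates every proof tree that verifies $\phi$ from premises which $\pi$ sends to tokens or to $1$, and this sum is precisely the sum of the monomial coefficients of $\pi\sem{\phi}{}$. This half is immediate and needs no further machinery.

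For the second assertion I would try to realize the count as a single semiring evaluation by invoking the Fundamental Property (Proposition~\ref{prop:hom}). The intended route is: regard $h$ as a semiring homomorphism $\poly{\Nat}{X,\nnX}\rightarrow\Nat$, so that $(h\circ\pi)\sem{\phi}{}=h(\pi\sem{\phi}{})$, and then observe that $h$ collapses every token, positive or negative, to $1$, whence applying $h$ to a reduced polynomial returns exactly the sum of its coefficients. Combined with the first assertion this would yield the stated equality between $(h\circ\pi)\sem{\phi}{}$ and the number of proof trees.

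The hard part---and the step I expect to be the real obstacle---is that the map prescribed here, with $h(p)=h(\nnp)=1$, is \emph{not} a semiring homomorphism on $\poly{\Nat}{X,\nnX}$: it would force $1=h(p)\cdot h(\nnp)=h(p\nnp)=h(0)=0$, contradicting the defining relation $p\nnp=0$. Equivalently, the hypothesis $f(p)\cdot f(\nnp)=0$ of the universality property (Proposition~\ref{prop:prov-univ}) fails, so no such homomorphism exists and the Fundamental Property cannot be applied as sketched. One must instead argue directly that evaluating $\phi$ in $\Nat$ under the interpretation $h\circ\pi$ reproduces the sum of coefficients, and this succeeds exactly when the relation $p\nnp=0$ never fires while $\pi\sem{\phi}{}$ is being formed, i.e.\ when no monomial produced contains a pair of complementary tokens. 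For a \ModDef\ interpretation this is automatic, since each fact is annotated by a single token and complementary tokens can never co-occur; the plain $\Nat$-evaluation then agrees with the sum of coefficients and with the proof-tree count. For a genuinely \ModComp\ interpretation it can fail: already $\exists x\,(R(x)\wedge\neg R(x))$ over a one-element universe, with $\pi(R(c))=z$ and $\pi(\neg R(c))=\nnz$, gives $\pi\sem{\phi}{}=z\nnz=0$ and hence no proof tree, whereas $(h\circ\pi)\sem{\phi}{}=1\cdot 1=1$. In such cases the $\Nat$-interpretation counts \emph{all} proof trees, including the ``inconsistent'' ones that use both a fact and its negation in different branches---precisely the trees the reduction $p\nnp=0$ is meant to discard. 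The clean identity as stated should therefore be understood for \ModDef\ interpretations (or, for \ModComp\ interpretations, as counting all proof trees rather than only the consistent ones captured by the reduced polynomial), and the crux of any honest proof is to isolate exactly this co-occurrence condition.
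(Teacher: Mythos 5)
Your analysis is essentially right, and the difficulty you isolate is real. The first assertion is exactly what the paper intends: it is an immediate consequence of Proposition~\ref{prop:all-proofs}, since a proof tree determines the multiset of tracked literals it uses and hence the monomial under which it is counted, so summing the coefficients of $\pi\sem{\phi}{}$ enumerates the proof trees without double counting. The paper supplies no separate argument for this corollary, so there is nothing further to match on that half.

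On the second assertion you have put your finger on a genuine defect in the statement rather than a gap in your own reasoning. As you observe, $h(p)=h(\nnp)=1$ violates the condition $f(p)\cdot f(\nnp)=0$ of Proposition~\ref{prop:prov-univ}, so $h$ induces no homomorphism out of $\poly{\Nat}{X,\nnX}$ and the Fundamental Property cannot be used to identify $(h\circ\pi)\sem{\phi}{}$ with the coefficient sum of $\pi\sem{\phi}{}$. Your counterexample is valid: with $A=\{c\}$, $\pi(R(c))=z$, $\pi(\neg R(c))=\nnz$ and $\phi=\exists x\,(R(x)\wedge\neg R(x))$ one gets $\pi\sem{\phi}{}=z\nnz=0$ while $(h\circ\pi)\sem{\phi}{}=1$; the discrepancy is precisely the ``inconsistent'' trees that use a fact together with its negation and are annihilated by the relation $p\nnp=0$, which the $\Nat$-interpretation cannot see. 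The claimed identity does hold whenever no monomial with complementary tokens arises in the unreduced computation --- in particular for \ModDef\ interpretations --- and that is the only situation in which the paper actually invokes the $\Nat$-count (in Corollary~\ref{cor:model-check} it is applied to the specialization $\refi{\pi}{\gothA}$, which is \ModDef). So your restricted reading is the correct one, and your proof of that restricted statement is sound; for genuinely \ModComp\ interpretations the two counts in the corollary can legitimately differ, and an honest proof must, as you say, isolate the co-occurrence condition.
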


A \ModComp\ interpretation may allow the tracking of both a literal
and its negation. Therefore, \ModComp\ interpretations are not 
\ModDef\ unless they do not make use of provenance tokens at all 
(in which case they are essentially canonical truth interpretations). 
Hence, 
Proposition~\ref{prop:all-proofs-mod-def} is not a simple particular case of 
Proposition~\ref{prop:all-proofs}. Nonetheless, we shall see how
\ModDef\ interpretations can be seen as \emph{specializations} of \ModComp\ 
interpretations with respect to models that ``agree'' 
(i.e., are \emph{compatible}) with them, as defined below. 

\begin{definition}
Let $\pi:\LitA\rightarrow\poly{\Nat}{X,\nnX}$ be a \ModComp\
interpretation and let $\gothA$
be a model with universe $A$ (same $A$). We say that $\gothA$
is \textbf{compatible} with $\pi$
if $\gothA\models L$ for any literal $L$ such that $\pi(L)=1$. 
Further, let $\Modpi:=\{\gothA\mid\gothA\mathrm{~is~compatible~with~}\pi\}$.
\end{definition}

For instance, the models shown in 
Figures~\ref{fig:modelG} and~\ref{fig:modelF}
are compatible with the interpretation
defined in Sect.~\ref{subsec:rev-ex}.

Now we can talk about satisfiability and validity \emph{restricted to the
class of models that agree with the provenance tracking assumptions}
made by an interpretation.

\begin{corollary}[to Proposition~\ref{prop:all-proofs}]
\label{cor:sat}
Let $\pi:\LitA\rightarrow\poly{\Nat}{X,\nnX}$ be a \ModComp\
interpretation and let $\phi$ be a first-order sentence. 
Then, $\phi$ is $\Modpi$-satisfiable if, and only if,  $\pi\sem{\phi}{}\neq0$,
and $\phi$ is $\Modpi$-valid if, and only if,  $\pi\sem{\neg\phi}{}=0$.
\end{corollary}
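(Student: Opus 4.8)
The plan is to derive Corollary~\ref{cor:sat} directly from Proposition~\ref{prop:all-proofs}, treating the two biconditionals separately but using the same underlying fact: that $\pi\sem{\phi}{}\neq0$ holds exactly when some proof tree for $\phi$ exists that uses only literals $\pi$ maps to tokens or to $1$, and that such a proof tree corresponds to a concrete compatible model. First I would recall the definition of $\Modpi$-satisfiability: $\phi$ is $\Modpi$-satisfiable means there exists $\gothA\in\Modpi$ with $\gothA\models\phi$.

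For the satisfiability equivalence, the forward direction is the easier half: if $\pi\sem{\phi}{}\neq0$, then by Proposition~\ref{prop:all-proofs} there is at least one proof tree verifying $\phi$ using premises from among the literals that $\pi$ maps to tokens or to $1$. I would then exhibit a model $\gothA\in\Modpi$ built to make exactly those premise literals true --- concretely, take the \ModDef\ specialization of $\pi$ that sends every positive token occurring in the proof tree to a nonzero value and its complement to $0$, sends tokens not used in the tree so as to respect the $1$-annotated literals, and reads off the model $\gothA_{\pi'}$ it defines. By construction the literals annotated $1$ hold in $\gothA$, so $\gothA$ is compatible with $\pi$, and the proof tree witnesses $\gothA\models\phi$. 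For the converse, suppose some $\gothA\in\Modpi$ satisfies $\phi$. Compatibility guarantees every literal $\pi$ maps to $1$ holds in $\gothA$, and the token-annotated literals that hold in $\gothA$ are precisely available premises; a proof tree for $\gothA\models\phi$ (which exists by Proposition~\ref{prop:sanity}) uses only such literals, so Proposition~\ref{prop:all-proofs} gives a corresponding nonzero monomial, whence $\pi\sem{\phi}{}\neq0$.

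The validity statement I would then obtain cheaply by duality. By definition $\phi$ is $\Modpi$-valid iff no model in $\Modpi$ falsifies $\phi$, i.e.\ iff $\neg\phi$ is \emph{not} $\Modpi$-satisfiable. Since for every $\gothA$ we have $\gothA\models\neg\phi$ iff $\gothA\not\models\phi$, and since $\nnf(\neg\phi)$ is again an NNF sentence to which Proposition~\ref{prop:all-proofs} applies verbatim, the satisfiability equivalence already proved (applied to $\neg\phi$ in place of $\phi$) yields: $\neg\phi$ is $\Modpi$-satisfiable iff $\pi\sem{\neg\phi}{}\neq0$. Negating both sides gives $\phi$ is $\Modpi$-valid iff $\pi\sem{\neg\phi}{}=0$, as claimed.

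The step I expect to be the genuine obstacle is the forward direction of satisfiability --- turning a single nonzero monomial (a proof tree) into an honest \emph{compatible} model. The subtlety is that a proof tree may mention only some of the tokens, leaving the truth of other token-annotated facts unconstrained, and one must choose their values freely while still producing a legitimate \ModDef\ interpretation whose defined model respects all the $1$-annotations required for compatibility. I would discharge this by noting that \ModComp\ interpretations are built precisely so that each fact falls into one of the three prescribed cases, and any assignment of each token-pair to "present" or "absent" yields a \ModDef\ specialization compatible with $\pi$; choosing the assignment dictated by the monomial (and arbitrarily elsewhere) does the job. Everything else reduces to unwinding definitions and the correspondence between monomials and proof trees already established, so no separate induction is needed beyond what Proposition~\ref{prop:all-proofs} supplies.
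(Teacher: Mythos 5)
Your proof is correct and follows essentially the route the paper intends: the paper states this as a direct corollary of Proposition~\ref{prop:all-proofs}, relying on exactly the monomial-to-compatible-model correspondence you spell out (and which the paper itself invokes in the ``reverse analysis'' remark after Corollary~\ref{cor:model-check}), with validity reduced to non-satisfiability of $\neg\phi$. Your filling-in of the forward direction --- building a compatible \ModDef\ specialization from a surviving monomial, which is consistent precisely because dual tokens annihilate each other and $1$-annotated literals have complement annotated $0$ --- is the right justification for what the paper leaves implicit.
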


This is not finite satisfiability (shown undecidable by Trakhtenbrot),
of course.
Even if we map every possible literal to a different provenance token
we only decide satisfiability in a model with exactly $|A|$ elements,
which is easily in NP (without talking about provenance).

\begin{example}
\label{ex:tautology}
With the same (digraph) vocabulary as in Sect.~\ref{subsec:ex} 
and~\ref{subsec:rev-ex} consider the sentence
$$
\tau := \exists x\,\forall y\,E(x,y)\,\rightarrow\,
            \forall y\,\exists x\,E(x,y).
$$
This is a well-known tautology (holding in all models, not just in finite
ones). Obviously,
$\nnf(\neg\tau) = \exists x\,\forall y\,E(x,y))\,\wedge\,\exists y\,\forall x\,\neg E(x,y)$. 
Now consider $V=\{a,b\}$ and a truth-compatible interpretation
$\pi$ that annotates $E(a,b), E(b,a), E(a,a), E(b,b)$ with $p,q,r,s$ 
respectively,
and the corresponding negated facts with $\nnp,\nnq,\nnr,\nns$. Then
$$
\pi\sem{\neg\tau}{}~=~(pr+qs)(\nnq\nnr+\nnp\nns)=0,
$$
verifying that $\tau$ is $\Modpi$-valid. 
\end{example}

From the provenance analysis of (provenance-restricted) 
validity/satisfiability that is enabled by Corollary~\ref{cor:sat}
we can obtain a provenance analysis of model checking, for each
model of a given sentence, as follows.

\begin{definition}
Let $\pi$ be \ModComp\ and let $\gothA\in\Modpi$.
The \textbf{specialization} of $\pi$ with respect to $\gothA$
is the $\poly{\Nat}{X,\nnX}$-interpretation
$\refi{\pi}{\gothA}:\LitA\rightarrow\poly{\Nat}{X,\nnX}$ defined by
$$
\refi{\pi}{\gothA}(L) ~=~ \begin{cases}
                          \pi(L) & \mathrm{if~}\gothA\models L\\
                          0      & \mathrm{otherwise.}
                          \end{cases}
$$
\end{definition}

Note that $\refi{\pi}{\gothA}$ is always \ModDef\ and the model it defines
is, of course, $\gothA$.

The \ModDef\ interpretation $\beta$ in Sect.~\ref{subsec:ex} is the specialization
with respect to the model $\gothG$ of the \ModComp\ interpretation $\pi$
in Sect.~\ref{subsec:rev-ex}, $\beta=\refi{\pi}{\gothG}$. 
Other specializations of $\pi$ are given
in Sect. ~\ref{subsec:rev-ex}. The next corollary finally justifies 
Proposition~\ref{prop:all-proofs-mod-def}.

\begin{corollary}[to Proposition~\ref{prop:all-proofs}]
\label{cor:model-check}
Let $\pi:\LitA\rightarrow\poly{\Nat}{X,\nnX}$ be a \ModComp\
interpretation, let $\gothA$ be structure that is compatible with $\pi$, and let 
$\phi$ be a first-order sentence such that $\gothA\models\phi$ (hence, by
Corollary~\ref{cor:sat}, $\pi\sem{\phi}{}\neq0$).

Then, $\refi{\pi}{\gothA}\sem{\phi}{}\neq0$ and
every monomial in $\refi{\pi}{\gothA}\sem{\phi}{}$
also appears in $\pi\sem{\phi}{}$, with the same coefficient.

Moreover, $\refi{\pi}{\gothA}\sem{\phi}{}$ describes 
\emph{all} the proof trees that verify $\gothA\models\phi$. In particular,
the sum of all the monomial coefficients in $\refi{\pi}{\gothA}\sem{\phi}{}$
counts the number of distinct such proof trees 
(as in Corollary~\ref{cor:count}, the same count can be obtained from an
$\Nat$-interpretation).
\end{corollary}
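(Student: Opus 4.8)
The plan is to realize $\refi{\pi}{\gothA}$ as a homomorphic image of $\pi$ and then read off every claim from Proposition~\ref{prop:all-proofs} applied to $\pi$ itself, using the Fundamental Property as the bridge. First I would introduce a ``killing'' endomorphism of the provenance semiring. For each fact $R(\bara)$ falling under clause~(1) of \ModComp, with tokens $z=\pi(R(\bara))$ and $\nnz=\pi(\neg R(\bara))$, set $g(z)=z,\ g(\nnz)=0$ if $\gothA\models R(\bara)$ and $g(z)=0,\ g(\nnz)=\nnz$ if $\gothA\models\neg R(\bara)$; put $g(0)=0$ and $g(1)=1$. Since $\gothA$ satisfies exactly one of $R(\bara),\neg R(\bara)$, at most one of $g(z),g(\nnz)$ is nonzero, so $g(z)\cdot g(\nnz)=0$ for every complementary pair and Proposition~\ref{prop:prov-univ} extends $g$ to a unique semiring homomorphism $g:\poly{\Nat}{X,\nnX}\rightarrow\poly{\Nat}{X,\nnX}$.

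Next I would check $g\circ\pi=\refi{\pi}{\gothA}$ by going through the three clauses of \ModComp\ together with compatibility. Under clause~(1), $g$ sends the pair $z,\nnz$ to the token of whichever literal $\gothA$ makes true and to $0$ for the other, which is exactly the definition of $\refi{\pi}{\gothA}$. Under clauses~(2) and~(3) the literal annotated $1$ is, by compatibility, true in $\gothA$, so $g$ fixes it while the companion literal is already annotated $0$; this again agrees with $\refi{\pi}{\gothA}$. The Fundamental Property (Proposition~\ref{prop:hom}) then gives
$$
\refi{\pi}{\gothA}\sem{\phi}{}~=~(g\circ\pi)\sem{\phi}{}~=~g\bigl(\pi\sem{\phi}{}\bigr).
$$

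Because $g$ is the identity on surviving tokens and sends each killed token to $0$, it acts on a polynomial by deleting every monomial that contains a killed token and leaving all other monomials---coefficients included---intact, with no two survivors ever merging. This yields the first assertion at once: each monomial of $\refi{\pi}{\gothA}\sem{\phi}{}$ occurs in $\pi\sem{\phi}{}$ with the same coefficient. For the characterization I would invoke Proposition~\ref{prop:all-proofs} for $\pi$, whose monomials correspond (coefficient by coefficient) to the proof trees of $\phi$ drawn from literals that $\pi$ annotates by a token or by $1$. A monomial survives $g$ exactly when all of its tokens annotate literals true in $\gothA$; and any $1$-annotated literal appearing in the associated proof tree is automatically true in $\gothA$ by compatibility, even though it contributes no token. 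Hence a monomial survives if and only if its proof tree uses only literals holding in $\gothA$, i.e.\ witnesses $\gothA\models\phi$. Thus $\refi{\pi}{\gothA}\sem{\phi}{}$ describes precisely these proof trees, and summing coefficients counts them---realizable by an $\Nat$-interpretation exactly as in Corollary~\ref{cor:count}. Since $\gothA\models\phi$, Proposition~\ref{prop:sanity} guarantees at least one such tree, so at least one monomial survives and $\refi{\pi}{\gothA}\sem{\phi}{}\neq0$.

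The step I expect to be the main obstacle is this last correspondence. One must verify that the bijection of Proposition~\ref{prop:all-proofs} restricts cleanly under $g$---that ``every token of the monomial annotates an $\gothA$-true literal'' is truly equivalent to ``the whole proof tree uses only $\gothA$-true literals.'' This is exactly where compatibility is indispensable, since it governs the untracked, $1$-annotated literals that a proof tree may use yet that never surface in any monomial; and one must confirm that multiplicities are transported faithfully, so that the surviving coefficient sum really is the number of proof trees witnessing $\gothA\models\phi$.
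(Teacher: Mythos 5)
Your proposal is correct and follows exactly the route the paper intends: the paper gives no explicit proof of this corollary, but its discussion in Sect.~4.1 (setting the killed tokens to $0$ turns $\pi$ into $\beta$ and $\pi\sem{\phi}{}$ into $\beta\sem{\phi}{}$) is precisely your specialization homomorphism $g$ obtained from Proposition~\ref{prop:prov-univ} and composed via the Fundamental Property, with the monomial-by-monomial reading supplied by Proposition~\ref{prop:all-proofs}. The only points worth making explicit are that $g$ is well defined because distinct tracked facts carry distinct tokens, and that the ``all proof trees of $\gothA\models\phi$'' direction uses the observation that compatibility prevents any literal true in $\gothA$ from being annotated $0$ by $\pi$ --- both of which you essentially acknowledge.
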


While $\refi{\pi}{\gothA}\sem{\phi}{}$ analyzes the provenance
of checking in a specific model, the more general
$\pi\sem{\phi}{}$ allows for a form \emph{reverse analysis}.
Indeed, to each monomial $m\,x_1^{m_1}\cdots x_k^{m_k}$ in
$\pi\sem{\phi}{}\neq0$ we can associate a model from $\Modpi$ that
makes true the literals that are annotated by $x_1,\ldots,x_k$
(and possibly more literals)
and, as we have seen, every model $\gothA\in\Modpi$
such that $\gothA\models\phi$ can be obtained this way.

\begin{example}[Example~\ref{ex:tautology} cont'd]
\label{ex:tautology-two}
Let us also compute the provenance of the tautology $\tau$ itself:
$$
\pi\sem{\tau}{}=(\nnp+\nnr)(\nnq+\nns)+(q+r)(p+s).
$$
Here $\Modpi$ consists of all possible structures with universe $\{a,b\}$
and, for any such $\gothA$, the model-refinement $\refi{\pi}{\gothA}$
sets to 0 exactly one of the two tokens in a complementary pair.
No matter how this is done, observe that 
$\refi{\pi}{\gothA}\sem{\tau}{}\neq0$.
\end{example}

\subsection{Confidence Maximization} 
\label{subsec:conf-max}

\begin{figure}
\centering
\begin{tikzpicture}
\begin{scope}
  \node[shape=circle,draw=black,very thick] (A) at (0,0) {$a$};
  \node[shape=circle,draw=black,very thick] (B) at (2,1.5) {$b$};
  \node[shape=circle,draw=black,very thick] (C) at (4,0) {$c$};

 \draw[->,very thick] 
      (A) edge[bend right=20] node[right] {$\,1/3$} (B);
 \draw[->,very thick,dotted] 
      (B) edge[bend right=20] node[left] {$q,\nnq\,$} (C);
 \draw[->,very thick] (A) edge node[below] {$1/3$} (C);
 \draw[->,very thick,dotted] 
      (C) edge[bend right=20] node[right] {$\,s,\nns$} (B);
 \draw[->,very thick,dashed] 
      (B) edge[bend right=20] node[left] {$1/3\,$} (A);
\end{scope}
\end{tikzpicture}
\vspace*{-3mm}
\caption{Maximum confidence model with dominant vertex}
\label{fig:max-conf}
\end{figure}
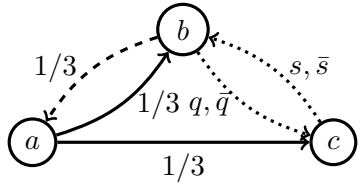

As in Sect.~\ref{subsec:conf} we use the Viterbi semiring $\Vit$
from Sect.~\ref{subsec:commutative-semirings} interpreting its values as
confidence scores.  
Interestingly, we can reverse analyze the provenance polynomials 
and use confidence scores to find a model in which confidence is
maximized.

In the context of the example in Sect.~\ref{subsec:rev-ex},
suppose that we have confidence $1/3$ in all the literals
that the \ModComp\ interpretation $\pi$ maps to
a (positive or negative) provenance token. This yields 
a $\Vit$-interpretation $\pi'$ which, by Propositions~\ref{prop:hom}
and~\ref{prop:prov-univ},
factors as $\pi'= h\circ\pi$  where 
$h$ is the unique semiring homomorphism
$\poly{\Nat}{X\cup\nnX}\rightarrow\Vit$ that maps all the tokens
$p,\ldots,\nnp,\ldots$ to $1/3$ (this is perfectly plausible, as 
confidence is \emph{not} probability).

Now recall from Sect.~\ref{subsec:rev-ex}
the sentence $\neg\phi$ (which asserts that there exists a dominant
vertex).  We have computed $\pi\sem{\neg\phi}{} ~=~ pr\nnt + \nnp
q\nns t$. Obviously, $\pi'$ is inconsistent so further applying
$h(pr\nnt + \nnp q\nns t)= 1/27+1/81= 4/81$ is not
meaningful. However, we know from Corollary~\ref{cor:model-check} that each
monomial in $\pi\sem{\neg\phi}{}$ corresponds to some model of $\neg\phi$. In
this case we have exactly two proof tree choices, corresponding to different
models, and they give different
confidence to $\neg\phi$. To maximize confidence we choose the
monomial $pr\nnt$ therefore a model in which we have an edge $E(a,b)$,
an edge $E(a,c)$ and \emph{no} edge $E(b,a)$.  This will ensure the
  dominance of vertex $a$ with confidence $1/27$, in other words,
$\neg\phi$ is $1/27$-true in this model.  
This model is shown in Figure~\ref{fig:max-conf} (the
 edge $E(b,a)$ is dashed because it is absent but we still wanted to show
  the confidence 1/3 in this absence). 
The edges $E(b,c)$ and $E(c,b)$ are dotted because neither their presence
nor their absence contradicts the provenance assumptions. We can, in fact, 
continue with a provenance analysis for these two edges if other properties
of the model are of interest.

\section{Model Update}

\begin{figure}
\centering
\begin{tikzpicture}
\begin{scope}
  \node[shape=circle,draw=black,very thick] (A) at (0,0) {$a$};
  \node[shape=circle,draw=black,very thick] (B) at (2,1.5) {$b$};
  \node[shape=circle,draw=black,very thick] (C) at (4,0) {$c$};

 \draw[->,very thick,dashed] 
      (A) edge[bend right=20] node[right] {$\,\nnp$} (B);
 \draw[->,very thick,dashed] 
      (B) edge[bend right=20] node[left] {$\nnq\,$} (C);
 \draw[->,very thick,dashed] (A) edge node[below] {$\nnr$} (C);
 \draw[->,very thick,dashed] 
      (C) edge[bend right=20] node[right] {$\,\nns$} (B);
 \draw[->,very thick] 
      (B) edge[bend right=20] node[left] {$t\,$} (A);
\end{scope}
\end{tikzpicture}
\vspace*{-3mm}
\caption{The model $\gothH$}
\label{fig:modelH}
\end{figure}

In this section we indicate a method for updating provenance polynomials
corresponding to a \ModDef\ interpretation when the model associated with
the interpretation is updated by inserting or deleting facts.

For example, recall from Sect.~\ref{subsec:ex} the interpretation $\beta$, 
the structure $\gothG$ that it defines (Figure~\ref{fig:modelG}), and the sentence
$\phi$ asserting ``no dominant vertex''. We had computed
$$
\beta\sem{\phi}{} ~=~ (\nnr+t)\cdot p\cdot(1+q+\nns).
$$
First suppose that we update $\gothG$ by \emph{deleting} $E(a,b)$ and $E(b,c)$. 
Keeping the other provenance targets, this results in the model 
$\gothH$ depicted in Figure~\ref{fig:modelH}. 
What is the 
corresponding update on the dual polynomial $\beta\sem{\phi}{}$? For the
provenance polynomials used for positive queries, as in~\cite{pods/GreenKT07},
this update is performed by setting $p=q=0$. However, this would result
in the polynomial 0, which is wrong, because $\gothH\models\phi$.

The right way to perform this update takes advantage of the results 
in Sect.~\ref{subsec:prop-prov}. We use the \ModComp\ interpretation $\pi$
given in Sect.~\ref{subsec:rev-ex} (or any other \ModComp\ interpretation
that both $\gothG$ and $\gothH$ are compatible with and that specializes
with respect to $\gothG$ to $\beta$). Recall from Sect.~\ref{subsec:rev-ex} that
$$
\pi\sem{\phi}{} ~=~ 
(\nnp+\nnr+t)\cdot(p+\nnq+s+\nnt)\cdot(1+q+r+\nns),
$$
and therefore
$$
\refi{\pi}{\gothH}\sem{\phi}{} ~=~ 
(\nnp+\nnr+t)\cdot\nnq\cdot(1+\nns)
$$
is the update we desire. Comparing this with $\beta\sem{\phi}{}$
shows the need for doing an excursion through $\pi$.

Next, suppose that we update $\gothG$ by \emph{inserting} $E(a,c)$ and $E(c,b)$
resulting in the model $\gothF$ in Figure~\ref{fig:modelF}. Then, the update
of $\beta\sem{\phi}{}$ is
$$
\refi{\pi}{\gothF}\sem{\phi}{} ~=~ 
t\cdot(p+s)\cdot(1+q+r).
$$

\section{Conclusions}

The previous work on provenance in databases focused on positive languages,
and essentially even on the $\exists,\wedge,\vee$ fragment of first-order logic. But it also
focused on Datalog, hence on least fixed points. The presentation in
this article should encourage us to extend these studies to the full least fixed-point logic
LFP. This will be done in subsequent work, in relationship to games. The model checking
games for LFP are parity games (see e.g. \cite{AptGra11}), which are much more complicated than the acyclic 
games with only finite plays that suffice for first-order logic. At this point it is not really clear yet how
a provenance analysis for arbitrary parity games can be done, but  it is known that, on finite structures, we can
restrict LFP to formulae that only make use of positive least fixed-point operators, without losing expressive power.
On the game-theoretic side this corresponds to restricting parity games to reachability games (that however may 
still admit infinite plays), and for these a combination of $\omega$-continuous semirings of formal power series
with the idea of dual indeterminates provides a sound mathematical basis for provenance analysis.

\paragraph{Acknowledgements}
Our collaboration on the topics of this paper started in Fall 2016 as we were
both participating in the ``Logical Structures in Computation''
program at the Simons Institute for the Theory of Computing in
Berkeley. We are very grateful to the Institute for support and for the
perfect collaborative atmosphere that it fosters. 
We would like to acknowledge very useful discussions at the Institute
with Andreas Blass, Miko{\l}aj Boja\'nczyk, 
Thomas Colcombet, Anuj Dawar, Kousha Etessami, Diego Figueira, 
Phokion Kolaitis, Ugo Montanari, Jaroslav Ne{\v s}et{\v r}il, 
Daniela Petri{\c s}an, and Miguel Romero.

Val Tannen is very grateful to his collaborators
in the development over several years of semiring
provenance for databases: (in chronological$\,\cdot\,$alphabetical
order) T.J.~Green, Grigoris Karvounarakis, Zack Ives, Nate Foster,
Yael Amsterdamer, Daniel Deutch, Tova Milo, Susan Davidson, Julia
Stoyanovich, Sudeepa Roy, and Yuval Moskovitch. He was partially supported
by NSF grants 1302212 and 1547360 and by NIH grant U01EB02095401.


\end{document}